\documentclass[11pt]{article}

\usepackage{amssymb,amsmath}
\usepackage[thmmarks,amsmath,amsthm]{ntheorem}
\usepackage{amsfonts,psfrag,enumitem,indentfirst,float,geometry,color,enumerate}
\usepackage{graphicx,subfigure,algorithm,algpseudocode,pifont}
\usepackage[hidelinks]{hyperref}

\usepackage{epstopdf}
\usepackage{stmaryrd}
\usepackage[normalem]{ulem}

\usepackage{booktabs}
\usepackage{array}
\usepackage{placeins}
\usepackage[font=small,labelfont=bf]{caption}
\usepackage{microtype}

\allowdisplaybreaks[4]
\usepackage{lineno}

\numberwithin{equation}{section}

\newtheorem{proposition}{Proposition}[section]

\newcommand{\bx}{\boldsymbol{x}}
\newcommand{\by}{\boldsymbol{y}}
\newcommand{\bn}{\boldsymbol{n}}
\newcommand{\cp}{\boldsymbol{\pi}}
\newcommand{\U}{\mathcal{U}}
\newcommand{\R}{\mathcal{R}}
\newcommand{\mean}[2]{\langle #1\rangle_{#2}}
\newcommand{\blfootnote}[1]{%
  \begingroup
  \renewcommand{\thefootnote}{}%
  \footnotetext{#1}%
  \endgroup
}

\hypersetup{
  pdftitle={A Unified Interface Neural Network via Projection-Correction for Elliptic Interface Problems}
}

\title{A Unified Interface Neural Network for Elliptic Interface Problems
\blfootnote{Keywords: elliptic interface problems, unified interface neural network, projection-correction, exact jump-condition enforcement}}
\author{
Qiao Zhuang \thanks{School of Science and Engineering, University of Missouri-Kansas City, Kansas City, MO 64110,  USA  (qzhuang@umkc.edu) }\and
Xu Zhang \thanks{Department of Mathematics, Oklahoma State University, Stillwater, OK 74078, USA (xzhang@okstate.edu)}}

\date{\small \today}

\begin{document}
\maketitle
\vspace{-2.5em}
\begin{abstract}
We develop a Unified Interface Neural Network (UINN) framework for elliptic
interface problems with discontinuous coefficients. A single shared
network provides a unified representation of the solution in both
subdomains, while a projection--correction construction embeds the solution
and flux jump conditions directly into the neural trial space. This
design aligns with the classical numerical principle of incorporating
interfacial physics into the approximation basis to construct interface-unfitted numerical methods. The jump
conditions are therefore satisfied exactly for any network parameters. Compared with methods based on separate phase networks
and numerical formulations with interface penalty terms, UINN provides a unified
representation with exact enforcement of interfacial jump conditions and
interface-penalty-free optimization. Those features streamline the numerical formulations and can enhance training robustness.
Numerical experiments on smooth interfaces at modestly large and
high coefficient contrasts demonstrate UINN's solution accuracy, near-machine-precision satisfaction of the interface jump conditions, and enhanced robustness.
\end{abstract}
\section{Introduction}

Elliptic interface problems describe diffusion in heterogeneous media whose
material properties change across internal interfaces, and their applications are widely found in different scientific fields. 
Let a smooth, closed interface $\Gamma$ divide a bounded domain
$\Omega\subset\mathbb R^2$ into $\Omega^-$ and $\Omega^+$. In this paper, we consider a unified interface neural network (UINN) approach for the elliptic interface problems with governing PDEs, boundary conditions, and jump conditions  as follows:
\begin{subequations}\label{eq:model}
\begin{align}
 -\nabla\!\cdot(\beta\nabla u)&=f
 &&\text{in }\Omega^-\cup\Omega^+,\label{eq:pde}\\
 u&=g &&\text{on }\partial\Omega,\label{eq:boundary}\\
 [u]&=0,\qquad [\beta\partial_{\bn}u]=0
 &&\text{on }\Gamma.\label{eq:transmission}
\end{align}
\end{subequations}
Here $\beta|_{\Omega^s}=\beta^s>0$, $s=\pm$, is constant in each phase,
$\bn$ is the unit normal directed from $\Omega^-$ to $\Omega^+$,
and $[v]=v^-|_\Gamma-v^+|_\Gamma$. Both one-sided derivatives use
this same normal. \eqref{eq:transmission} prescribes the solution and flux jump conditions across the interface.

This type of interface problem has been extensively studied using classical numerical methods, including those defined on meshes that do not fit to the interface. One broad approach incorporates the interface jump conditions through modifications to the discrete scheme or variational formulation, as in cut finite element methods (CutFEM) \cite{2002HansboHansbo, 2016BurmanGuzmanSanchezSarkisHcontrast} and in immersed interface methods (IIM) \cite{1994LevequeLi,2003DumettKeener}. Another approach adapts the approximation space to the interface jump conditions through specially constructed basis functions, as in multiscale finite element methods (MsFEM) \cite{1997HouWu,2010ChuGrahamHou} and extended finite element methods (XFEM) \cite{2006VaughanSmithChopp,2016WangXiaoXu}. Immersed finite element (IFE) methods~\cite{2003LiLinWu, 2004LiLinLinRogers, 2010HeLinLin, 2015LinLinZhang} follow the latter approach: on elements intersected by the interface, they employ piecewise polynomial basis functions that incorporate the jump conditions into the local trial space, thereby enabling interface problems to be solved on interface-unfitted meshes.

Alongside these classical discretization strategies, neural-network approximations offer a flexible and mesh-free way to represent piecewise smooth solutions, and have increasingly been explored for interface problems. Their representational flexibility naturally accommodates phase-dependent solution structures, while the prescribed interface conditions still need to be incorporated appropriately. Physics-informed neural networks (PINNs)~\cite{Raissi2019} provide a natural framework for this purpose by incorporating the governing equations and interface conditions into the training objective. Domain-decomposed formulations such as XPINNs~\cite{XPINN} and interfaced neural networks (INNs)~\cite{LuINN2022} further exploit the flexibility of neural networks to represent distinct solution branches in different subdomains. However, representing the two phases separately does not by itself produce a trial space satisfying the jump conditions, which are typically retained as interface-related residual terms in the loss and balanced against the PDE and boundary residuals.

Beyond subdomain-wise (phase-wise) representations, interface information can also be incorporated directly into the neural trial representation. Interface PINNs (I-PINNs) ~\cite{IPINN2024}, the discontinuity-capturing shallow neural network~\cite{cuspshallow22}, and cusp-capturing PINNs~\cite{cuspPINNell23} use phase-dependent activations, subdomain indicators, and level-set-based features, respectively, to represent piecewise solutions and partially enforce interfacial physics such as discontinuities in the unweighted flux jump across the interface. The immersed interface neural network (IINN)~\cite{Zhang2025} incorporates interface structure through a learned lifting that separates the irregular contribution from a smoother remainder. These constructions demonstrate different ways to incorporate interface information or partial interfacial physics into neural approximations, but satisfaction of the prescribed jump conditions remains dependent on interface-residual training.

Hard-constraint and projection-based approaches share the goal of reducing reliance on interface residuals. For interface optimal control problems, the construction in \cite{LaiHardConstraint2025} embeds the Dirichlet data and solution jump into the ansatz when the required auxiliary functions are available analytically, while retaining the flux condition in the loss. The deep projection method~\cite{HeDeepProjection2026} enforces constraints derived from local finite-difference equations by projecting neural predictions within a domain-decomposition procedure. These developments provide precedents for incorporating constraints directly into neural predictions, which motivates us to develop a \emph{unified neural trial representation that enforces both solution and flux jump conditions by construction}.

We therefore propose the Unified Interface Neural Network (UINN) for the elliptic interface problem considered here. UINN combines a single shared network representing both phases with a projection--correction construction that enforces jump conditions along the smooth interface for every choice of trainable parameters.
The shared network with a phase indicator is related to the interface-aware architectures discussed above, while the projection--correction construction makes the prescribed interface jump conditions intrinsic to the resulting trial space. This construction also aligns with the trial-space principle of IFE methods, particularly higher-order methods that incorporate jump conditions on actual interfaces~\cite{Adjerid2025, 2019GuoLin, 2025ZZML, 2026LinLinZhang, 2018ZhuangGuo}. The structural alignment lies in embedding interfacial physics into the approximation space, realized here through a unified neural representation.

This representation-level enforcement leads to interface-constraint-free optimization. In standard formulations, the solution and flux jump conditions remain training objectives that must be balanced against the PDE and boundary residuals. Balancing multiple loss terms is a recognized challenge in
PINN training due to possible imbalances in their
backpropagated gradients~\cite{WangTengPerdikaris2021}. In contrast, UINN satisfies these conditions by construction, leaving only the PDE residuals and exterior boundary mismatch to be minimized. Interface matching is therefore no longer a competing training task, and we do not need an interface penalty.
The main contributions and their significance are summarized as follows:
\begin{itemize}
\item \textbf{Unified neural approximation with exact jump enforcement.}
We develop a projection--correction construction that uses a single shared network to represent the solution on both sides of the interface and enforces both the solution and flux jump conditions exactly for any network parameters. Thus, the jump conditions are satisfied by construction rather than learned during training.

\item \textbf{Optimization without interface penalties.}
Because the jump conditions are enforced by construction of the network architecture, UINN removes the corresponding interface penalty terms from the loss. Training therefore focuses on the PDE residuals and boundary conditions, without tuning interface penalty weights or balancing jump errors against PDE errors.

\item \textbf{Bridging classical and neural approximation for interface problems.}
UINN carries over a key design principle of classical methods such as IFE to neural approximation: build interfacial physics into the approximation space rather than into the numerical scheme. The local
projection--correction mechanism provides a common realization of this
principle for different smooth interface geometries.

\item \textbf{Robust performance in more challenging interface regimes.}
By enforcing the jump conditions directly in the neural trial space (network architecture), UINN avoids the sensitivity and competition introduced by interface penalty weights. This may contribute to more robust training across different geometries and coefficient contrasts under the tested configurations, while achieving comparable or better accuracy than the standard interface-penalty-based PINNs at comparable trainable parameter counts. The high-contrast tests particularly highlight this robustness.
\end{itemize}
The remainder of the paper is organized as follows.
Section~\ref{sec:method} presents the UINN construction,
Section~\ref{sec:formulations} describes the numerical formulations
and algorithms, Section~\ref{sec:numerics} reports the numerical experiments and
results, and Section~\ref{sec:conclusion} provides a brief conclusion and discussion of the paper.

\section{UINN Architecture}
\label{sec:method}
The UINN architecture is built around a projection--correction process that transforms a unified neural representation into an approximation that satisfies the interface jump conditions. The projection stage enforces the prescribed interface jump conditions on the interface data, while the correction stage incorporates these projected quantities into the neural field. This construction embeds the interface conditions directly into the approximation space, rather than enforcing them through additional loss penalties.

\subsection{Unified branch representation and interface projection}
\label{sec:projection}

A single smooth network $N_\theta(\bx,\zeta_s)$, where
$s\in\{-,+\}$ indexes the phase and
$\zeta_-=-1$, $\zeta_+=+1$ are the corresponding phase-indicator inputs,
defines the two raw branch fields
\begin{equation}
 \widehat u_\theta^s(\bx)=N_\theta(\bx,\zeta_s),\qquad
 \widehat g_\theta^s=\widehat u_\theta^s|_\Gamma,\qquad
 \widehat a_\theta^s=\partial_{\bn}\widehat u_\theta^s|_\Gamma.
 \label{eq:raw}
\end{equation}
Spatial derivatives are taken with $\zeta_s$ fixed. The branches share all
trainable parameters, but their values and derivatives on $\Gamma$
are initially unconstrained.

Coefficient-dependent weighting of interface quantities is widely used in robust discretizations of heterogeneous diffusion problems, including Nitsche-type methods \cite{Annavarapu2012,2016BurmanGuzmanSanchezSarkisHcontrast}. Motivated by this general principle, we adopt a diffusion-scaled least-squares metric for the interface-trace projection, measuring the trace correction on each side in the form \(\beta^s(z-\widehat g_\theta^s)\). Thus, at each \(y\in\Gamma\), we define the projected common trace as
\begin{equation}\label{eq:value-projection}
g_\theta^*(y)
=
\arg\min_{z\in\mathbb{R}}
\frac12
\sum_{s\in\{-,+\}}
\left|\beta^s\bigl(z-\widehat g_\theta^s(y)\bigr)\right|^2
=
\frac{(\beta^-)^2\widehat g_\theta^-(y)
      +(\beta^+)^2\widehat g_\theta^+(y)}
     {D},
\qquad
D=(\beta^-)^2+(\beta^+)^2.
\end{equation}
The projected normal derivatives minimize the unweighted Euclidean
distance to $(\widehat a_\theta^-,\widehat a_\theta^+)$ subject to
$\beta^-a^- -\beta^+a^+=0$. With
$r_{q,\theta}=\beta^-\widehat a_\theta^- -\beta^+\widehat a_\theta^+$,
the solution is
\begin{equation}
 a_\theta^{-,*}=\widehat a_\theta^- -\frac{\beta^-}{D}r_{q,\theta},
 \qquad
 a_\theta^{+,*}=\widehat a_\theta^+ +\frac{\beta^+}{D}r_{q,\theta}.
 \label{eq:flux-projection}
\end{equation}
The use of different metrics for the two projections is a design choice,
motivated by the different roles of the trace and normal-derivative
corrections.
The cutoff function in the trace correction
(see \eqref{eq:band}, \eqref{eq:cutoff}, and \eqref{eq:corrected} below)
has second derivatives that scale as $\delta^{-2}$,
so the correction can contribute terms scaling as
$\beta^s b_\theta^s/\delta^2$ to the PDE residual.
Here $b_\theta^s=g_\theta^*-\widehat g_\theta^s$
is the trace adjustment defined in \eqref{eq:discrepancies} below.
This motivates weighting the trace corrections by $\beta^s$.
For the normal derivatives, we instead use the Euclidean projection
to minimize the total adjustment to the raw predictions subject to
flux continuity.
Although these choices are not necessarily optimal, exact satisfaction
of both jump conditions follows from the projection constraints
for any positive coefficients $\beta^\pm$.
Neither projection introduces an additional trainable parameter.

\subsection{A geometry-independent local correction}
\label{sec:extension}

Let $\phi$ be a signed defining coordinate for $\Gamma$, negative in
$\Omega^-$ and positive in $\Omega^+$, and let
\begin{equation}
 \Gamma=\{\phi=0\},\qquad
 \U_\delta=\{\bx:|\phi(\bx)|<\delta\},\qquad
 \overline{\U_\delta}\subset\Omega.
 \label{eq:band}
\end{equation}
On this neighborhood choose a smooth retraction
$\cp:\U_\delta\to\Gamma$ satisfying $\cp(\by)=\by$ for
$\by\in\Gamma$. Write
$\kappa_\phi(\by)=\partial_{\bn}\phi(\by)>0$.
The maps are assumed sufficiently smooth for the corrected field to have
two phase-wise spatial derivatives; for example, $C^3$ geometric data
and a smooth network suffice. A signed-distance coordinate and its
closest-point projection are one admissible choice, but are not required.
In particular, $\delta$ is a width in the coordinate $\phi$, rather than
necessarily a Euclidean distance.

For $s=\pm$, define the discrepancies on $\Gamma$ and their explicitly composed
extension by
\begin{equation}
 b_\theta^s=g_\theta^*-\widehat g_\theta^s,\qquad
 c_\theta^s=a_\theta^{s,*}-\widehat a_\theta^s,\qquad
 B_\theta^s(\bx)=(b_\theta^s\circ\cp)(\bx)
                  =b_\theta^s\bigl(\cp(\bx)\bigr).
 \label{eq:discrepancies}
\end{equation}
The coefficient of the first-order extension is another interface
function,
\begin{equation}
 k_\theta^s(\by)=
 \frac{c_\theta^s(\by)-\partial_{\bn}B_\theta^s(\by)}
      {\kappa_\phi(\by)},\qquad \by\in\Gamma.
 \label{eq:extension-coefficient}
\end{equation}
In this formula, the derivative acts on the full compound function
$B_\theta^s=b_\theta^s\circ\cp$. Equivalently,
\begin{equation}
 \partial_{\bn}B_\theta^s(\by)
 =\nabla_\Gamma b_\theta^s(\by)\cdot
       \bigl(D\cp(\by)\bn(\by)\bigr).
 \label{eq:chain-rule}
\end{equation}
Here $D\cp$ is the Jacobian of the retraction, and
$\nabla_\Gamma b=(I-\bn\bn^T)\nabla\widetilde b|_\Gamma$ is the
surface gradient, with $\widetilde b$ any smooth extension of $b$.
Because $D\cp\bn$ is tangent to $\Gamma$, the right-hand side does
not depend on that auxiliary extension.

To localize the correction to $\mathcal U_\delta$, we use
the compactly supported cutoff
\begin{equation}
 \eta_\delta(t)=\left(1-\frac{|t|}{\delta}\right)_+^4
                   \left(1+4\frac{|t|}{\delta}\right),
 \qquad z_+=\max(z,0).
 \label{eq:cutoff}
\end{equation}
It is $C^2$, equals one at zero, has zero first derivative there,
and vanishes for $|t|\geq\delta$.
With this cutoff, the \emph{corrected neural network solutions}
(i.e., UINN solutions) are defined as
\begin{equation}
 u_\theta^s(\bx)=\widehat u_\theta^s(\bx)
 +\eta_\delta\bigl(\phi(\bx)\bigr)
 \left[b_\theta^s\bigl(\cp(\bx)\bigr)
       +\phi(\bx)k_\theta^s\bigl(\cp(\bx)\bigr)\right]
 \quad\text{in }\U_\delta,
 \label{eq:corrected}
\end{equation}
and $u_\theta^s=\widehat u_\theta^s$ outside the neighborhood.
The piecewise physical field uses $u_\theta^s$ in $\Omega^s$.

The following proposition verifies the key property of UINN:
both interface jump conditions are satisfied by construction,
independently of the network parameters and training process.
\begin{proposition}\ignorespaces
For every parameter vector $\theta$, the correction
\eqref{eq:corrected} has interface trace $g_\theta^*$ and one-sided
normal derivative $a_\theta^{s,*}$. Hence it satisfies both homogeneous
jump conditions in \eqref{eq:transmission} exactly in exact
arithmetic, for any admissible pair $(\phi,\cp)$.
\end{proposition}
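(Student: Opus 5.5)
The plan is a direct verification on $\Gamma$: evaluate the corrected field \eqref{eq:corrected} and its normal derivative at a point $\by\in\Gamma$, using only $\phi(\by)=0$, $\cp(\by)=\by$, and the stated properties of the cutoff \eqref{eq:cutoff}, namely $\eta_\delta(0)=1$ and $\eta_\delta'(0)=0$. Since $\overline{\U_\delta}\subset\Omega$ and $\Gamma\subset\U_\delta$, every interface point lies in the region where the correction is active. All the interface functions involved ($b_\theta^s$, $k_\theta^s$) are well defined and smooth for every $\theta$ under the stated regularity assumptions, so the argument does not depend on the parameters.

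\textbf{Trace.} At $\by\in\Gamma$ we have $\eta_\delta(\phi(\by))=1$ and $\phi(\by)k_\theta^s(\by)=0$. Hence
\[
u_\theta^s(\by)=\widehat g_\theta^s(\by)+b_\theta^s(\by)=g_\theta^*(\by)
\]
by \eqref{eq:discrepancies}. The right-hand side does not depend on $s$, so $[u_\theta]=0$.

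\textbf{Normal derivative.} Set $E^s(\bx)=B_\theta^s(\bx)+\phi(\bx)K_\theta^s(\bx)$ with $K_\theta^s=k_\theta^s\circ\cp$. By the product rule,
\[
\nabla u_\theta^s=\nabla\widehat u_\theta^s+\eta_\delta'(\phi)\,E^s\,\nabla\phi+\eta_\delta(\phi)\,\nabla E^s .
\]
On $\Gamma$ the middle term vanishes because $\eta_\delta'(0)=0$. This is the one place where the specific cutoff matters: the zeroth-order discrepancy $b_\theta^s$ need not vanish, so a cutoff with nonzero slope at $0$ would spoil the flux. Next,
\[
\nabla E^s=\nabla B_\theta^s+K_\theta^s\nabla\phi+\phi\,\nabla K_\theta^s ,
\]
and the last term drops on $\Gamma$. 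Taking the normal component and using $\partial_{\bn}\phi=\kappa_\phi$ gives
\[
\partial_{\bn}u_\theta^s(\by)=\widehat a_\theta^s+\partial_{\bn}B_\theta^s(\by)+\kappa_\phi(\by)\,k_\theta^s(\by).
\]
Substituting \eqref{eq:extension-coefficient}, the $\partial_{\bn}B_\theta^s$ terms cancel and we get
\[
\partial_{\bn}u_\theta^s(\by)=\widehat a_\theta^s+c_\theta^s=a_\theta^{s,*}.
\]

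\textbf{Flux condition.} Finally, \eqref{eq:flux-projection} gives
\[
\beta^-a_\theta^{-,*}-\beta^+a_\theta^{+,*}=r_{q,\theta}-\frac{(\beta^-)^2+(\beta^+)^2}{D}\,r_{q,\theta}=0,
\]
so $[\beta\partial_{\bn}u_\theta]=0$.

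\textbf{Main obstacle.} The main difficulty is bookkeeping rather than analysis. One must confirm that $\partial_{\bn}B_\theta^s$ in \eqref{eq:extension-coefficient} is the derivative of the compound function $b_\theta^s\circ\cp$, exactly the term produced by differentiating \eqref{eq:corrected}; \eqref{eq:chain-rule} shows it is intrinsically defined. One must also check that $\kappa_\phi>0$, so that $k_\theta^s$ is well defined. Nothing in the argument uses that $\phi$ is a distance or that $\cp$ is the closest-point map; only $\cp|_\Gamma=\mathrm{id}$ and $\phi|_\Gamma=0$ enter. This gives the claimed independence from the admissible pair $(\phi,\cp)$.
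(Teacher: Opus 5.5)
Your proposal is correct and follows the paper's proof: the trace comes from $\phi|_\Gamma=0$ and $\cp|_\Gamma=\mathrm{id}$, the normal derivative from the product and chain rules with $\eta_\delta'(0)=0$ and the cancellation of $\partial_{\bn}B_\theta^s$ via \eqref{eq:extension-coefficient}, and flux continuity from \eqref{eq:flux-projection}. You only add detail the paper leaves implicit, namely the intermediate gradient expansions and the explicit check $\beta^-a_\theta^{-,*}-\beta^+a_\theta^{+,*}=0$.
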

\begin{proof}
Since $\phi=0$ and $\cp$ is the identity on $\Gamma$,
$u_\theta^s|_\Gamma=\widehat g_\theta^s+b_\theta^s=g_\theta^*$.
The product and chain rules, with $\eta_\delta'(0)=0$, give
\[
 \partial_{\bn}u_\theta^s|_\Gamma
 =\widehat a_\theta^s+\partial_{\bn}B_\theta^s
       +\kappa_\phi k_\theta^s
 =\widehat a_\theta^s+c_\theta^s=a_\theta^{s,*}.
\]
Equation~\eqref{eq:flux-projection} imposes the common physical flux.
\end{proof}
For a signed-distance coordinate with closest-point retraction,
$\kappa_\phi=1$ and $D\cp\bn=0$, so that
\eqref{eq:extension-coefficient} reduces to $k_\theta^s=c_\theta^s$.
For a general retraction $\cp$, the chain-rule term must be retained.
The same correction formula \eqref{eq:corrected} applies to \emph{different
smooth interfaces}, with only the corresponding geometric maps changing.
As established by the preceding proposition, the correction enforces
the jump conditions \eqref{eq:transmission} exactly, while the overall
solution accuracy and training performance still depend on the
approximation and optimization of the PDE residual.

The correction is applied only in a narrow neighborhood $\mathcal U_\delta$ that lies strictly inside $\Omega$, so that the geometric maps remain smooth and the correction does not interfere with the exterior boundary condition. The width $\delta$ controls the spatial extent of the correction and also affects its derivatives: from the scaled cutoff in \eqref{eq:cutoff}, its first- and second-order derivatives introduce factors of $\delta^{-1}$ and $\delta^{-2}$, respectively. Thus, choosing $\delta$ too small can amplify derivatives in the corrected field and make the PDE residual harder to optimize.

\section{Numerical Formulations and Algorithms }\label{sec:formulations}
\subsection{Residual minimization and training algorithm}
\label{sec:training}

For $s=\pm$, let $X^s\subset\Omega^s$ be bulk collocation sets,
$B^s\subset\U_\delta\cap\Omega^s$ additional half-band sets, and
$X_b\subset\partial\Omega$ boundary points. The additional half-band samples on $B^s$ are used to better resolve and control the PDE residual in the region where the projection--correction modifies the trial field. For a finite set $X$,
$$\mean{v}{X}=|X|^{-1}\sum_{\bx\in X}v(\bx).$$ Define
\begin{equation}
 \R_\theta^s=-\beta^s\Delta u_\theta^s-f,\qquad
 F_s=\left(\frac{1}{|\Omega^s|}\int_{\Omega^s}|f|^2\,d\bx\right)^{1/2},
 \qquad G=\max_{\partial\Omega}|g|.
 \label{eq:scales}
\end{equation}
All scales are positive for the tests below. 
In a problem with vanishing
data, positive reference scales would instead be prescribed. 
{For a phase-wise field $w$, let
$R_w^s=-\beta^s\Delta w^s-f$.}
We define the common PDE-and-boundary residual functional by
\begin{equation}
\mathcal L_{\mathrm{pb}}(w)
=
\frac12\sum_{s\in\{-,+\}}
\left\{
\mean{\left(R_w^s/F_s\right)^2}{X^s}
+
\mean{\left(R_w^s/F_s\right)^2}{B^s}
\right\}
+
\lambda_b
\mean{\left((w-g)/G\right)^2}{X_b}.
\end{equation}
The loss function of UINN is therefore
\begin{equation}
 \mathcal L_U(\theta)=\mathcal L_{\mathrm{pb}}(u_\theta).
  \label{eq:unified-loss}
\end{equation}
In \(\mathcal L_{\mathrm{pb}}\), the bulk and half-band PDE residuals use the same phase-wise normalization \(F_s\), but their averages over \(X^s\) and \(B^s\) enter the loss separately.
Thus, the bulk and near-interface residuals are measured on the same phase-wise scale, while the half-band contributes its own PDE-residual term rather than being absorbed into the bulk average. Training uses only the prescribed source and boundary data. Since the jump conditions are already enforced by the corrected trial field, no interface penalty terms or separate interface-loss samples are required. For a collocation point $\bx$ inside the correction band, the projected interface data used in the correction are evaluated at the corresponding interface point $\cp(\bx)\in\Gamma$. The complete UINN training and evaluation procedure is summarized in
Algorithm~\ref{alg:unified}.

\begin{algorithm}[!htbp]
\caption{UINN approach for an elliptic interface problem}
\label{alg:unified}
\small

\textbf{Data:}
Subdomains $\Omega^\pm$, coefficients $\beta^\pm$, source $f$, and
boundary data $g$; geometric maps $\phi$ and $\cp$, unit normal $\bn$,
and correction width $\delta$; shared network $N_\theta(\bx,\zeta_s)$, $s=\pm $;
collocation-set sizes, residual scales $F_\pm$ and $G$, boundary weight
$\lambda_b$, and Adam/L-BFGS training parameters.
\par\smallskip

\textbf{Result:}
A trained corrected field $u_{\theta^*}$ satisfying both jump conditions
by construction.
\par\medskip

\textbf{Step 1. Initialization.}
Initialize the parameters $\theta$ of the shared network
$N_\theta(\bx,\zeta_s)$. Sample $X^\pm$ uniformly with respect to area in $\Omega^\pm$,
sample $B^\pm$ uniformly with respect to area in
$\U_\delta\cap\Omega^\pm$, and sample $X_b$ uniformly along
$\partial\Omega$.\par\smallskip

\textbf{Step 2. Adam training.}
For each Adam update:
\begin{list}{}{\setlength{\leftmargin}{1.4em}
\setlength{\itemsep}{2pt}\setlength{\topsep}{2pt}}

\item Refresh $X^\pm$, $B^\pm$, and $X_b$ at the prescribed
resampling interval.

\item For each phase collocation point $\bx$, evaluate the raw field
$\widehat u_\theta^s(\bx)=N_\theta(\bx,\zeta_s)$. If
$\bx\in\U_\delta$, set $\by=\cp(\bx)$, evaluate both raw traces
$\widehat g_\theta^\pm(\by)$ and normal derivatives
$\widehat a_\theta^\pm(\by)$, and project them by
\eqref{eq:value-projection}--\eqref{eq:flux-projection} to obtain
$g_\theta^*(\by)$ and $a_\theta^{\pm,*}(\by)$.

\item Use these projected interface data to form
$b_\theta^\pm$, $c_\theta^\pm$, and $k_\theta^\pm$ from
\eqref{eq:discrepancies}--\eqref{eq:extension-coefficient}, and evaluate
the corrected field $u_\theta^s(\bx)$ from \eqref{eq:corrected}.
Outside $\U_\delta$, set
$u_\theta^s(\bx)=\widehat u_\theta^s(\bx)$.

\item Compute
\[
\R_\theta^s(\bx)
=
-\beta^s\Delta u_\theta^s(\bx)-f(\bx),\quad s=\pm,
\]
on $X^\pm$ and $B^\pm$, together with the boundary mismatch on $X_b$,
and assemble $\mathcal L_U(\theta)$ from
\eqref{eq:unified-loss}. Differentiate through the complete corrected
field, backpropagate, and update $\theta$.

\end{list}

\textbf{Step 3. L-BFGS refinement.}
Draw fresh sets $X^\pm$, $B^\pm$, and $X_b$ and keep them fixed
throughout L-BFGS. Repeatedly evaluate the same corrected field and loss
as in Step~2, and minimize $\mathcal L_U$ using the prescribed
strong-Wolfe L-BFGS procedure. Retain the monitored parameter vector with
the smallest training loss and stop according to the prescribed
iteration budget, loss target, or optimizer stopping criterion.
\par\smallskip

\textbf{Step 4. Independent evaluation.}
Restore the retained parameters $\theta^*$. On independently generated
solution, residual, boundary, and interface evaluation points, compute
the solution errors, PDE residuals, boundary errors, and interface jump
and interface-accuracy measures defined in Section~\ref{sec:numerics}.
\end{algorithm}

The PDE residual is evaluated from the corrected field $u_\theta^s$, not
from the raw branch $\widehat u_\theta^s$. Therefore, the correction terms
and their dependence on the projected interface data are differentiated
together with $u_\theta^s$ when computing $\Delta u_\theta^s$. Since the correction coefficient $c_\theta^s$ in
\eqref{eq:discrepancies} contains the normal derivative
$\widehat a_\theta^s=\partial_{\bn}\widehat u_\theta^s$ defined in
\eqref{eq:raw}, evaluating the PDE residual of the corrected field
generally requires derivatives of $\widehat u_\theta^s$ up to third
order. This higher differentiation order
is one source of the additional computational cost of the current UINN
implementation.

\subsection{Standard piecewise PINN benchmark}
\label{sec:standard-method}

The standard benchmark uses two independent networks
$v_\psi^-(\bx)$ and $v_\psi^+(\bx)$ to approximate the
solutions in $\Omega^-$ and $\Omega^+$, respectively. Unlike the UINN,
no projection--correction is applied to these two fields. Their PDE and
boundary contributions are evaluated using the same functional
$\mathcal L_{\mathrm{pb}}$ defined above, with $v_\psi$ replacing
$u_\theta$ and with the same phase-wise and half-band residual terms.

To couple the two independently represented phase solutions, we introduce
an additional interface collocation set $X_\Gamma$. At each point in
$X_\Gamma$, both networks and their normal derivatives are evaluated,
and the solution and physical-flux jumps are penalized through
\begin{align}
 \mathcal L_S(\psi)
 &=
 \mathcal L_{\mathrm{pb}}(v_\psi)
 +\lambda_u
 \mean{([v_\psi]/U_\Gamma)^2}{X_\Gamma}
 +\lambda_q
 \mean{([\beta\partial_{\bn}v_\psi]/Q_\Gamma)^2}{X_\Gamma},
 \label{eq:standard-loss}\\
 U_\Gamma&=G,\qquad
 Q_\Gamma=\frac{F_-|\Omega^-|}{|\Gamma|}.
 \label{eq:standard-scales}
\end{align}
Here $U_\Gamma=G$ uses the boundary-data scale for the solution jump,
whereas $Q_\Gamma$ provides a characteristic interface-flux scale based
on the source strength in the enclosed phase $\Omega^-$. These scales
depend only on the prescribed problem data and geometry and do not use
exact interface values or derivatives.

The points in $X_\Gamma$ are sampled independently and stratified
uniformly in arc length along $\Gamma$. A separate random stream is used
for $X_\Gamma$ so that changing the number of interface points does not
alter the phase-wise, half-band, or boundary collocation sets. The
half-band sets $B^\pm$ continue to contribute PDE residual terms, as in
the UINN, so that the two methods use the same additional near-interface
PDE sampling; for the standard PINN, however, these points are not
associated with any correction of the trial field. The complete standard PINN training and evaluation procedure is summarized
in Algorithm~\ref{alg:standard}.

\begin{algorithm}[!htbp]\captionsetup{font={small},labelfont={bf}}
\caption{Standard piecewise PINN benchmark}
\label{alg:standard}
\small

\textbf{Data:}
Subdomains $\Omega^\pm$, coefficients $\beta^\pm$, source $f$, and
boundary data $g$; two independent networks
$v_\psi^-(\bx)$ and $v_\psi^+(\bx)$; collocation-set sizes,
residual scales $F_\pm$, $G$, $U_\Gamma$, and $Q_\Gamma$; weights
$\lambda_b$, $\lambda_u$, and $\lambda_q$; and Adam/L-BFGS training
parameters.
\par\smallskip

\textbf{Result:}
A trained piecewise field $v_{\psi^*}$.
\par\medskip

\textbf{Step 1. Initialization.}
Initialize the parameters of the two independent networks.
Sample $X^\pm$ uniformly with respect to area in $\Omega^\pm$,
sample $B^\pm$ uniformly with respect to area in
$\U_\delta\cap\Omega^\pm$, and sample $X_b$ uniformly along
$\partial\Omega$. Independently sample $N_\Gamma$ points in
$X_\Gamma$ uniformly with respect to arc length on $\Gamma$.
Compute $U_\Gamma$ and $Q_\Gamma$ from
\eqref{eq:standard-scales}.
\par\smallskip

\textbf{Step 2. Adam training.}
For each Adam update:
\begin{list}{}{\setlength{\leftmargin}{1.4em}
\setlength{\itemsep}{2pt}\setlength{\topsep}{2pt}}

\item Regenerate the collocation sets at the prescribed resampling
interval, using a separate random stream for $X_\Gamma$.

\item For $s=\pm$, evaluate $v_\psi^s$ on $X^s$ and $B^s$ and
compute the phase-wise PDE residual
\[
R_{v_\psi}^s
=
-\beta^s\Delta v_\psi^s-f.
\]
Evaluate the boundary mismatch on $X_b$.

\item At each point in $X_\Gamma$, evaluate both network branches and
their normal derivatives to compute the solution jump
$[v_\psi]$ and flux jump
$[\beta\partial_{\bn}v_\psi]$.

\item Assemble $\mathcal L_S(\psi)$ from
\eqref{eq:standard-loss}, backpropagate, and update $\psi$.

\end{list}

\textbf{Step 3. L-BFGS refinement.}
Generate fresh sets $X^\pm$, $B^\pm$, $X_b$, and $X_\Gamma$ and keep
them fixed throughout L-BFGS. Repeatedly evaluate the same PDE,
boundary, and interface residuals as in Step~2 and minimize
$\mathcal L_S$ using the prescribed strong-Wolfe L-BFGS procedure.
Retain the monitored parameter vector with the smallest training loss and
stop according to the prescribed iteration budget, loss target, or
optimizer stopping criterion.
\par\smallskip

\textbf{Step 4. Independent evaluation.}
Restore the retained parameters $\psi^*$. On independently
generated solution, residual, boundary, and interface evaluation
points, compute the solution errors, PDE residuals, boundary errors,
interface jump errors, and interface-accuracy measures defined in
Section~\ref{sec:numerics}.
\end{algorithm}

\FloatBarrier

\section{Numerical Tests}
\label{sec:numerics}

We consider two manufactured interface problems on
$\Omega=(-1,1)^2$. {The circular and star-shaped interface experiments in
Sections~\ref{sec:circle} and~\ref{sec:star} use modestly large contrast
$(\beta^-,\beta^+)=(1,100)$; Section~\ref{sec:star1000} adds the
star-shape interface test at high-contrast coefficient $(\beta^-,\beta^+)=(1,1000)$.}
The circle test follows the radial manufactured-solution construction
in the immersed finite element study~\cite{2025ZZML}; the star-shaped interface test uses
a polynomial manufactured solution with the same homogeneous
jump conditions. We use $\delta=0.2$ for circular interface tests and $\delta=0.15$ in star-shaped interface tests, as indicated in Table \ref{tab:common-setup}. Unless stated otherwise, we set \(\lambda_b=10\) for both UINN and standard PINNs and \(\lambda_u=\lambda_q=10\) for the standard PINN benchmarks.

\paragraph{Network architectures and optimization setup.}
The UINN uses a single shared network with three hidden tanh layers of
width $40$, giving $3,481$ trainable parameters. The primary standard
PINN uses two independent width-$28$ branches, giving $3,474$ trainable
parameters in total and providing an approximately equal-parameter
comparison. We additionally consider a standard PINN with two width-$40$
branches, giving $6,882$ parameters in total, as an equal-hidden-width
comparison.

All networks use Xavier-normal initialization, zero initial biases, unit
output gain, and double-precision arithmetic.
Each of the three experiment groups
(Sections~\ref{sec:circle}--\ref{sec:star1000})
uses $2,000$ Adam updates with a
\emph{cosine learning-rate} decay from $10^{-3}$ to $10^{-4}$.
The collocation points are regenerated every $100$ Adam
updates.

The subsequent L-BFGS refinement uses a fresh collocation set
that is held fixed throughout the refinement.
The L-BFGS iteration budgets are summarized in
Table~\ref{tab:common-setup}, with the same Adam and L-BFGS
iteration budgets used by UINN and standard PINNs within each
experiment group.

L-BFGS refinement is performed on a fixed collocation set using a
strong-Wolfe line search. The loss is monitored every $25$ iterations,
with at most $50$ loss-and-gradient evaluations allowed over each such
interval, and the iterate attaining the smallest monitored loss is
retained. The stopping thresholds are $10^{-10}$ for the gradient norm
and $10^{-13}$ for successive changes in the parameters or loss.

\paragraph{Sampling and residual scales.}
Collocation points in $X^s$ are sampled uniformly by area over the
entire subdomain $\Omega^s$, without excluding the half-band.
An additional set of collocation points, denoted by $B^s$, is sampled
uniformly by area from the half-band $\U_\delta\cap\Omega^s$.
The boundary points in $X_b$ are sampled uniformly along each edge. For convenience, define
\[
N_I
=
|X^-|+|X^+|+|B^-|+|B^+|,
\qquad
N_b=|X_b|,
\]
where $N_I$ is the total number of PDE collocation points used in the
interior, including the additional half-band points, and $N_b$ is the
total number of outer-boundary collocation points.

For each geometry, the UINN and the approximately equal-parameter
standard PINN configuration used for the main comparison use exactly
the same point coordinates for the phase-wise PDE collocation sets
$X^\pm$, the additional half-band sets $B^\pm$, and the boundary set
$X_b$. They are also evaluated on the same independently generated
test points. Only the standard PINN uses an additional interface set
$X_\Gamma$ for the interface-penalty terms. Table~\ref{tab:common-setup} summarizes the correction-band width,
training collocation counts, and optimization budgets for the three
numerical test settings in Sections~\ref{sec:circle}, \ref{sec:star}, and \ref{sec:star1000}.

\begin{table}[!htbp]
\centering
\small
\begin{tabular}{@{}lccc@{}}
\toprule
Quantity
& Circle $(1{,}100)$
& Star $(1{,}100)$
& Star $(1{,}1000)$ \\
\midrule
$\delta$
& $0.20$ & $0.15$ & $0.15$\\
$|X^-|$
& $256$ & $512$ & $512$\\
$|X^+|$
& $768$ & $1024$ & $1024$\\
$|B^-|$
& $128$ & $256$ & $256$\\
$|B^+|$
& $128$ & $256$ & $256$\\
$|X_b|$
& $256$ & $512$ & $512$\\
Adam iterations
& $2000$ & $2000$ & $2000$\\
L-BFGS iterations
& $1600$ & $4000$ & $8000$\\
\bottomrule
\end{tabular}
\caption{Common training setup within each of the three numerical test
settings. The listed phase-wise, half-band, and boundary collocation
counts and optimization budgets are shared by the UINN and all standard
PINN benchmark cases in the corresponding test for different interfaces and $(\beta^-, \beta^+)$.}
\label{tab:common-setup}
\end{table}

\paragraph{Computational cost metrics.}
To compare the computational work of the two approaches, we consider
the number of network evaluations required for one evaluation of the
training loss.

For the UINN, let $N_A$ denote the number of PDE collocation points
that lie in the active correction band $\U_\delta$. Importantly,
$N_A$ does not denote an additional collocation set: it counts all
points from $X^-\cup X^+\cup B^-\cup B^+$ that lie in
$\U_\delta$. Thus, all points in $B^\pm$ contribute to $N_A$, together
with those points in the phase-wise sets $X^\pm$ that also fall inside
the correction band. At each such point, the correction requires
evaluating both raw branches at the corresponding interface point
$\cp(\bx)$. Hence one UINN loss evaluation involves
\(
N_I+N_b+2N_A
\)
network-point evaluations.

For the standard PINN, let
$N_\Gamma=|X_\Gamma|$ denote the number of interface collocation points
used in the jump-penalty terms. Since both independent branches are
evaluated at every interface point, one standard-PINN loss evaluation
involves
\(
N_I+N_b+2N_\Gamma
\)
branch-point evaluations.

In addition to these network-point counts, the two methods differ in
differentiation cost. The UINN PDE residual generally requires
raw-network spatial derivatives up to third order because the correction
already contains first derivatives before the Laplacian is applied,
whereas the standard PINN requires derivatives up to second order.
We therefore report the resulting network-point and forward multiply–accumulate (MAC) counts
per loss evaluation, together with the L-BFGS closure counts and measured
training wall time. The detailed computational-cost comparison is
reported for the circular-interface experiment in
Section~\ref{sec:circle}. The corresponding results for the other two cases
are omitted since they lead to the same qualitative cost comparison.

\paragraph{Accuracy measures.}
Let $e=u_h-u$, where $u_h$ is the evaluated network approximation. For $s=\pm$, we use
\begin{equation}
 E_0=\frac{\|e\|_{L^2(\Omega)}}{\|u\|_{L^2(\Omega)}},\qquad
 E_1=\frac{\left(\sum_s\|\nabla e^s\|_{L^2(\Omega^s)}^2\right)^{1/2}}
              {\left(\sum_s\|\nabla u^s\|_{L^2(\Omega^s)}^2\right)^{1/2}},
 \qquad
 E_E=\frac{\left(\sum_s\beta^s\|\nabla e^s\|_{L^2(\Omega^s)}^2\right)^{1/2}}
              {\left(\sum_s\beta^s\|\nabla u^s\|_{L^2(\Omega^s)}^2\right)^{1/2}}.
 \label{eq:errors}
\end{equation}
The phase-restricted versions use the exact norm on that phase as
denominator. 
Let $\mathcal G_h$ denote the $200\times200$ Cartesian cell-center grid
with spacing $h=2/200=0.01$. The norm integrals are approximated by
midpoint quadrature on $\mathcal G_h$ with weight $h^2$, and
\[
e_{h,\infty}
=
\max_{\bx\in\mathcal G_h}|e(\bx)|
\]
denotes the maximum on these points. Solution images use a separate
$201\times201$ grid. All tabulated relative errors are recorded in percentages.
We quantify the solution and flux jump errors across the interface using the following measures:
\begin{equation}
 J_{u,\infty}=\max_{\Gamma_h}|[u_h]|,\quad
 J_{q,\infty}=\max_{\Gamma_h}|[\beta\partial_{\bn}u_h]|,\quad
 J_{u,2}=\operatorname{RMS}_{\Gamma_h}[u_h],\quad
 J_{q,2}=\operatorname{RMS}_{\Gamma_h}[\beta\partial_{\bn}u_h].
 \label{eq:jumps-norms}
\end{equation}
Here $\Gamma_h$ denotes an independent set of $1,021$ interface
evaluation points generated from the angular parameters
\[
\vartheta_j=\frac{2\pi(j+0.371)}{1021},
\qquad j=0,\ldots,1020.
\]
For the circular interface, using the radius $r_0$ defined in
Section~\ref{sec:circle}, these points are
$r_0(\cos\vartheta_j,\sin\vartheta_j)$. For the star-shaped interface,
using the radial function $\rho$ defined in
\eqref{eq:star-interface}, they are
$\rho(\vartheta_j)(\cos\vartheta_j,\sin\vartheta_j)$.
The shift $0.371$ avoids coincidence with the training interface points
and special symmetry locations. The RMS values on $\Gamma_h$ use
arc-length weights; for a polar curve, these weights are proportional to
$\sqrt{\rho^2+(\rho')^2}$.

At the independent interface evaluation points $\Gamma_h$, we compare
the minus-side numerical trace $u_h^-$ with the exact interface value
$u|_\Gamma$, and the minus-side physical normal flux
$\beta^-\partial_{\bn}u_h^-$ with the exact physical normal flux
$q_\Gamma$. We measure the trace error by its RMS and maximum values,
and the flux error by its RMS value and corresponding relative RMS value:\begin{subequations}\label{eq:interface-errors}
\begin{align}
E_{\Gamma,u,2}
&=
\operatorname{RMS}_{\Gamma_h}
\bigl(u_h^- - u|_\Gamma\bigr),
&
E_{\Gamma,u,\infty}
&=
\max_{\Gamma_h}
\bigl|u_h^- - u|_\Gamma\bigr|,
\label{eq:interface-errors-u}
\\
E_{\Gamma,q,2}
&=
\operatorname{RMS}_{\Gamma_h}
\bigl(\beta^-\partial_{\bn}u_h^- - q_\Gamma\bigr),
&
E_{\Gamma,q,\mathrm{rel}}
&=
\frac{E_{\Gamma,q,2}}
{\operatorname{RMS}_{\Gamma_h}(q_\Gamma)},
\label{eq:interface-errors-q}
\end{align}
\end{subequations}
where
$q_\Gamma=\beta^-\partial_{\bn}u^-|_\Gamma
=\beta^+\partial_{\bn}u^+|_\Gamma$
is the exact physical normal flux. Here,
$E_{\Gamma,q,\mathrm{rel}}$ is reported because $q_\Gamma$ varies
significantly along the interface, providing a scale-normalized measure of the
normal-flux error.

For any independent residual evaluation set $Y\subset\Omega^s$, the
reported relative PDE residual is
\[
\left(
\frac{\sum_{\bx\in Y}|\mathcal R_h(\bx)|^2}
     {\sum_{\bx\in Y}|f(\bx)|^2}
\right)^{1/2}.
\]
For half-band residuals, $Y$ is sampled independently in
$\U_\delta\cap\Omega^s$. For all numerical tests, boundary errors are
evaluated independently at $257$ points on each edge together with the
four corners, giving $1,032$ boundary evaluation points in total.

\paragraph{Timing and comparison scope.}
The circular and star-shaped experiments with
$(\beta^-,\beta^+)=(1,100)$ were run on an AMD EPYC 9V74 CPU using
one PyTorch thread, Python~3.12.14, PyTorch~2.14.0+cpu, and double
precision. Within each of the three test settings in
Table~\ref{tab:common-setup}, the UINN and standard PINN use the same
phase-wise, half-band, and boundary collocation points, the same
independent evaluation sets, unit output gain, and the same Adam and
L-BFGS iteration budgets. The standard PINN additionally uses its
interface collocation set $X_\Gamma$.


For the reported timing results, training wall time denotes the total
Adam and L-BFGS time, including collocation resampling during Adam and
the periodic loss evaluations during L-BFGS, but excluding the
independent post-training evaluation, visualization, and data export.
Equal optimizer iteration budgets do not imply equal computational
work because the UINN and standard PINN have different costs per loss
evaluation, as described above.
\subsection{Circular interface}
\label{sec:circle}

Let $R=(x^2+y^2)^{1/2}$, $r_0=\pi/6.28$, $r=R/r_0$, and
$\Gamma=\{R=r_0\}$. The inner disk is $\Omega^-$.
For general diffusion coefficients $(\beta^-,\beta^+)$, the terms $f$ and $g$ can be generated with the
manufactured solution
\begin{equation} \label{eq:circle-exact}
 u^s(x,y)
 =1+\frac{r^5-1}{\beta^s},\quad s=\pm.
 \end{equation}
 Therefore
 \begin{equation*}
 f(x,y)=-\frac{25}{r_0^2}r^3,
 \qquad
 g=1+\frac{r^5-1}{\beta^+}
 \quad\text{on }\partial\Omega.
\end{equation*}

\subsubsection{UINN experiments and results}
\label{sec:circle-unified}

For the circular interface, the interface is given by $R=r_0$, so we
take the signed radial defining coordinate $\phi$ and the corresponding
radial retraction $\cp$ as
\begin{equation}
 \phi(x,y)=R-r_0,\qquad
 \cp(x,y)=\frac{r_0}{R}(x,y)^T.
 \label{eq:circle-geometry}
\end{equation}These maps are used only in the annular correction neighborhood, away
from the origin and the exterior boundary. Here $\kappa_\phi=1$ and
$D\cp\bn=0$, so that $k_\theta^s=c_\theta^s$.

Using the training setup in Table~\ref{tab:common-setup},
Figure~\ref{fig:circle-u-training} shows the collocation distribution
and optimization history, while Figure~\ref{fig:circle-u-solution}
shows the exact solution, UINN approximation, and pointwise absolute
error.

\begin{figure}[!htbp]\centering
\includegraphics[width=\linewidth]{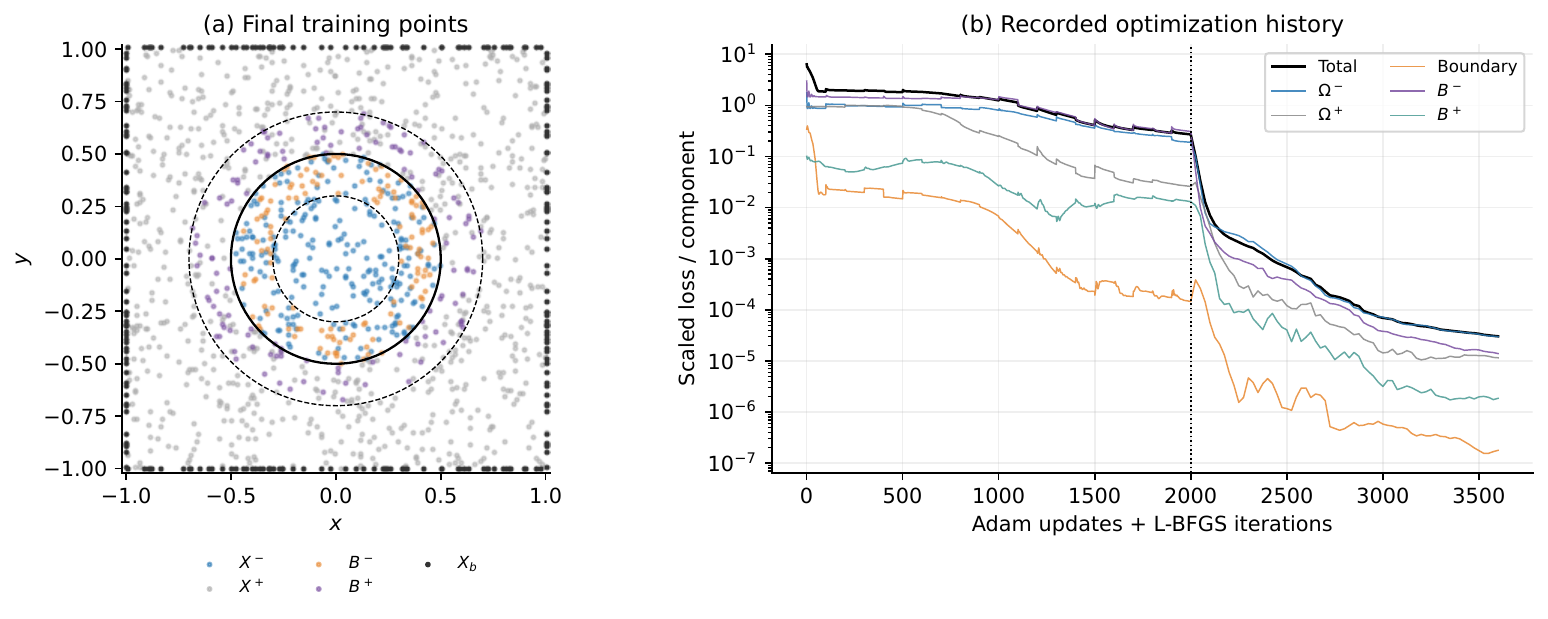}
\caption{UINN training for the circular-interface problem with
$(\beta^-,\beta^+)=(1,100)$. Left: collocation points in
$\Omega^-$ and $\Omega^+$, in the two half-bands, and on the outer
boundary. The solid curve denotes $\Gamma$, and the dashed curves denote
the boundaries of the correction band $|\phi|<\delta$, with
$\delta=0.20$. Right: optimization history of the total loss and its
normalized residual components. The dotted vertical line separates
Adam from L-BFGS.}\label{fig:circle-u-training}
\end{figure}

\begin{figure}[!htbp]\centering
\includegraphics[width=\linewidth]{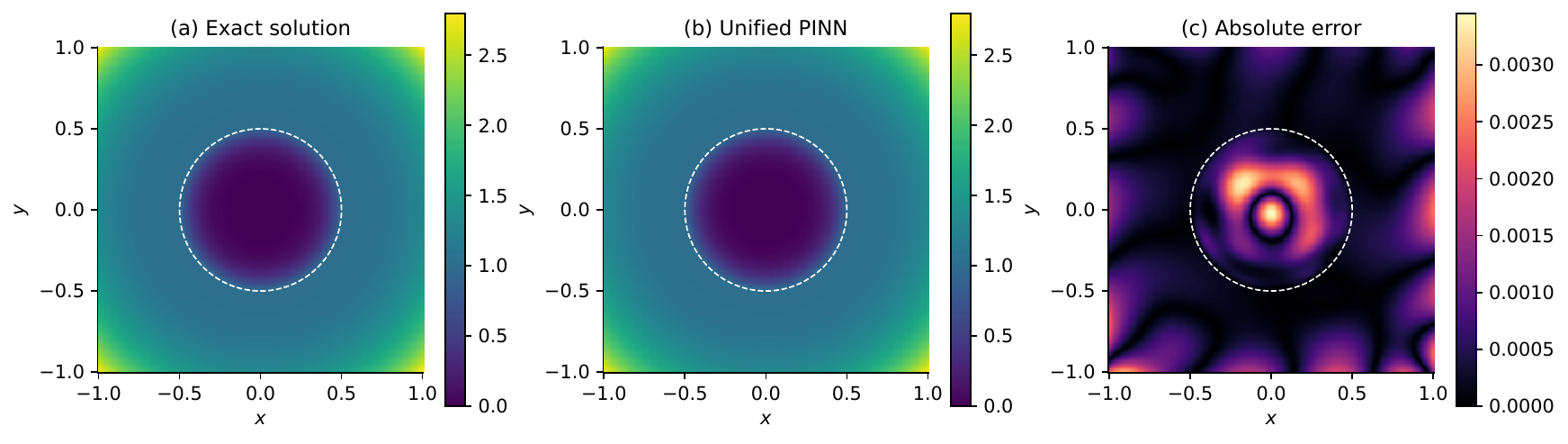}
\caption{Results on visualization grid for the circular interface, $(\beta^-,\beta^+)=(1,100)$: exact
solution (left), UINN solution (middle), and pointwise absolute error (right). The dashed circle marks $\Gamma$.}
\label{fig:circle-u-solution}
\end{figure}
Rows ``U'' of Tables~\ref{tab:circle-s-errors} and
\ref{tab:circle-s-residuals} report the solution errors and
independently evaluated PDE residuals for the circular UINN
experiment. The circular-interface columns of
Table~\ref{tab:uinn-raw-corrected-jumps} compare the raw and
corrected interface jumps. The UINN achieves small solution
errors in both subdomains and satisfies both jump conditions
to nearly machine precision, whereas the raw branches exhibit noticeable
jump errors. These results show that the projection--correction
construction enforces the jump conditions independently of
the residual minimization used to approximate the solution.

\par
\begin{table}[!htbp]
\centering
\footnotesize
\setlength{\tabcolsep}{3pt}
\renewcommand{\arraystretch}{1.13}
\captionsetup{font={small},labelfont={bf}}
\begin{tabular}{@{}lrrrrrr@{}}
\toprule
& \multicolumn{2}{c}{Circular interface}
& \multicolumn{2}{c}{Star-shaped interface}
& \multicolumn{2}{c}{Star-shaped interface} \\
& \multicolumn{2}{c}{$(\beta^-,\beta^+)=(1,100)$}
& \multicolumn{2}{c}{$(\beta^-,\beta^+)=(1,100)$}
& \multicolumn{2}{c}{$(\beta^-,\beta^+)=(1,1000)$} \\
\cmidrule(lr){2-3}\cmidrule(lr){4-5}\cmidrule(l){6-7}
Field & $J_{u,\infty}$ & $J_{q,\infty}$
& $J_{u,\infty}$ & $J_{q,\infty}$ & $J_{u,\infty}$ & $J_{q,\infty}$ \\
\midrule
Raw branches $\widehat u_\theta^\pm$ & $1.940\times10^{-2}$ & $1.493\times10^{-1}$ & $4.533\times10^{-2}$ & $1.266\times10^{0}$ & $2.514\times10^{-2}$ & $1.242\times10^{0}$ \\
Corrected field $u_\theta^\pm$ & $3.109\times10^{-15}$ & $5.329\times10^{-14}$ & $2.487\times10^{-14}$ & $2.043\times10^{-12}$ & $3.020\times10^{-14}$ & $3.524\times10^{-12}$ \\
\bottomrule
\end{tabular}
\caption{Maximum solution and flux jumps for the raw branch fields
$\widehat u_{\theta}^{\pm}$ and the corrected fields
$u_{\theta}^{\pm}$ for the UINN in the three numerical tests,
evaluated at $1,021$ separate post-training interface points in each test.
The three column groups correspond to the circular-interface problem
with $(\beta^-,\beta^+)=(1,100)$ and the star-shaped-interface problems
with $(\beta^-,\beta^+)=(1,100)$ and $(1,1000)$, respectively.}
\label{tab:uinn-raw-corrected-jumps}
\end{table}

{This experiment and the related results verify the intended projection--correction mechanism:
the UINN retains an accurate approximation to the reference solution in the two subdomains while
enforcing both jump conditions to nearly machine precision ({Table~\ref{tab:uinn-raw-corrected-jumps}, circular-interface columns}). The corresponding
raw branch fields do not satisfy the interface conditions, as expected by design.\par}

\begingroup\ignorespaces
\subsubsection{Standard PINN benchmark tests}
\label{sec:circle-standard}

For the circular-interface standard PINN benchmark, C128 is designated as the primary
standard PINN case because its two width-$28$ branches contain $3,474$
trainable parameters in total, closely matching the $3,481$ parameters
of the UINN. C64 and C256 use the same architecture and differ only in
the number of interface collocation points (64 and 256 points respectively), providing a check on the
effect of interface sampling. CW40 keeps $128$ interface points but uses
width $40$ in each branch, providing a comparison
with an equal network width to UINN, but with an increased total parameter count. All four standard cases use the
same phase-wise, half-band, and boundary collocation sets as the UINN
and the same $2,000$ Adam updates and $1,600$ L-BFGS iterations.
Table~\ref{tab:circle-s-cases} summarizes these configurations. 

\emph{Throughout the paper}, the label ``U'' in the tables denotes the UINN case.
\par
\begin{table}[!htbp]
\centering
\small
\renewcommand{\arraystretch}{1.13}
\ignorespaces
\captionsetup{font={small},labelfont={bf}}
\begin{tabular}{@{}lrrrrr@{}}
\toprule
Case & Hidden width & $N_\Gamma$ & Parameters & $\lambda_u$ & $\lambda_q$ \\
\midrule
U & 40 / shared & 0 & 3,481 & -- & -- \\
C128 & 28 / each & 128 & 3,474 & 10 & 10 \\
C64 & 28 / each & 64 & 3,474 & 10 & 10 \\
C256 & 28 / each & 256 & 3,474 & 10 & 10 \\
CW40 & 40 / each & 128 & 6,882 & 10 & 10 \\
\bottomrule
\end{tabular}
\caption{Network, interface sampling, and penalty configurations for the circular-interface problem with
$(\beta^-,\beta^+)=(1,100)$. C128 is the primary approximately
equal-parameter case; C64 and C256 vary the number of interface
collocation points, while CW40 uses the same hidden width as the UINN. All networks have three hidden layers. }
\label{tab:circle-s-cases}
\end{table}
\par
\begin{figure}[!htbp]\captionsetup{font={small},labelfont={bf}}\centering
\includegraphics[width=\linewidth]{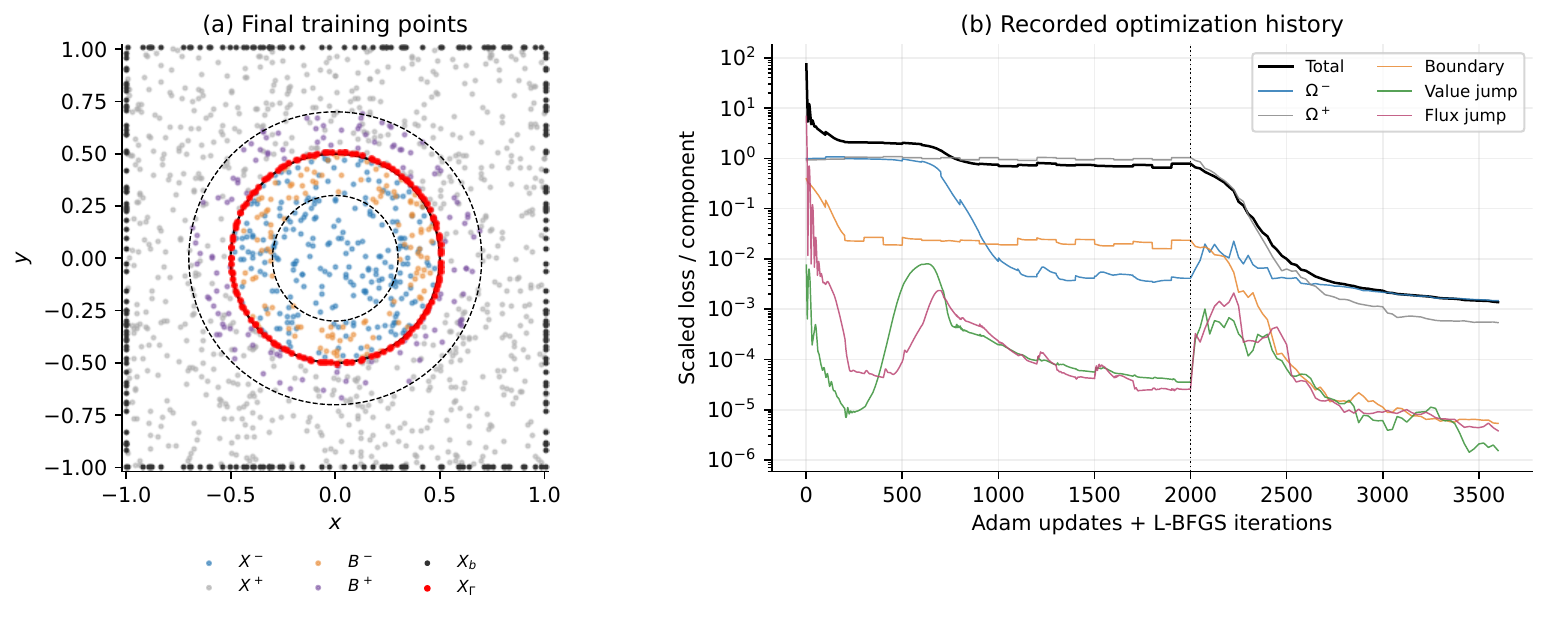}
\caption{Standard PINN case C128 for the circular-interface problem with
$(\beta^-,\beta^+)=(1,100)$. Left: phase-wise, half-band,
boundary, and interface collocation points. Right: training history of
the total loss and its PDE, boundary, value-jump (solution-jump), and flux-jump components.}\label{fig:circle-s-training}
\end{figure}

Figure~\ref{fig:circle-s-training} shows the collocation distribution
and optimization history for the primary PINN C128 case, while
Figure~\ref{fig:circle-s-solution} shows the exact solution, PINN
approximation, and pointwise absolute error.
{As indicated in Tables~\ref{tab:circle-s-cases} and
\ref{tab:circle-s-errors}, the UINN achieves substantially smaller
solution errors than the approximately equal-parameter C128 case,
in the global $L^2$, broken $H^1$, energy errors, and sampled maximum
absolute errors.
Increasing the standard PINN width improves the accuracy, but CW40
still has larger errors than the UINN despite using
nearly twice as many parameters. Increasing the number of interface
collocation points from $64$ to $256$ does not produce a systematic
improvement, suggesting that insufficient interface sampling is not
the main reason for the errors in these standard PINN results.}

\begin{table}[!htbp]
\centering
\small
\renewcommand{\arraystretch}{1.13}
\footnotesize
\setlength{\tabcolsep}{3.5pt}
\ignorespaces
\captionsetup{font={small},labelfont={bf}}
\captionsetup{font={small},labelfont={bf}}
\begin{tabular}{@{}lrrrrrr@{}}
\toprule
Case & $E_0$ (\%) & $E_0^-$ (\%) & $E_0^+$ (\%)
     & $E_1$ (\%) & $E_E$ (\%) & $e_{h,\infty}$ \\\midrule
U & 0.06590 & 0.31453 & 0.04445 & 0.34218 & 0.34641 & $3.456\times10^{-3}$ \\
C128 & {2.04200} & {12.85481} & {0.37345} & {4.90296} & {1.85130} & {$1.225\times10^{-1}$} \\
C64 & {2.09817} & {13.24985} & {0.34614} & {5.23487} & {1.83306} & {$1.262\times10^{-1}$} \\
C256 & {2.13525} & {13.49574} & {0.34085} & {5.10561} & {1.69182} & {$1.274\times10^{-1}$} \\
CW40 & {0.57016} & {3.32306} & {0.23858} & {1.56243} & {0.83943} & {$3.155\times10^{-2}$} \\
\bottomrule
\end{tabular}
\caption{Solution error comparison between the UINN and standard PINN
cases for the circular-interface problem with
$(\beta^-,\beta^+)=(1,100)$.}
\label{tab:circle-s-errors}
\end{table}

\begin{figure}[!htbp]\captionsetup{font={small},labelfont={bf}}\centering
\includegraphics[width=\linewidth]{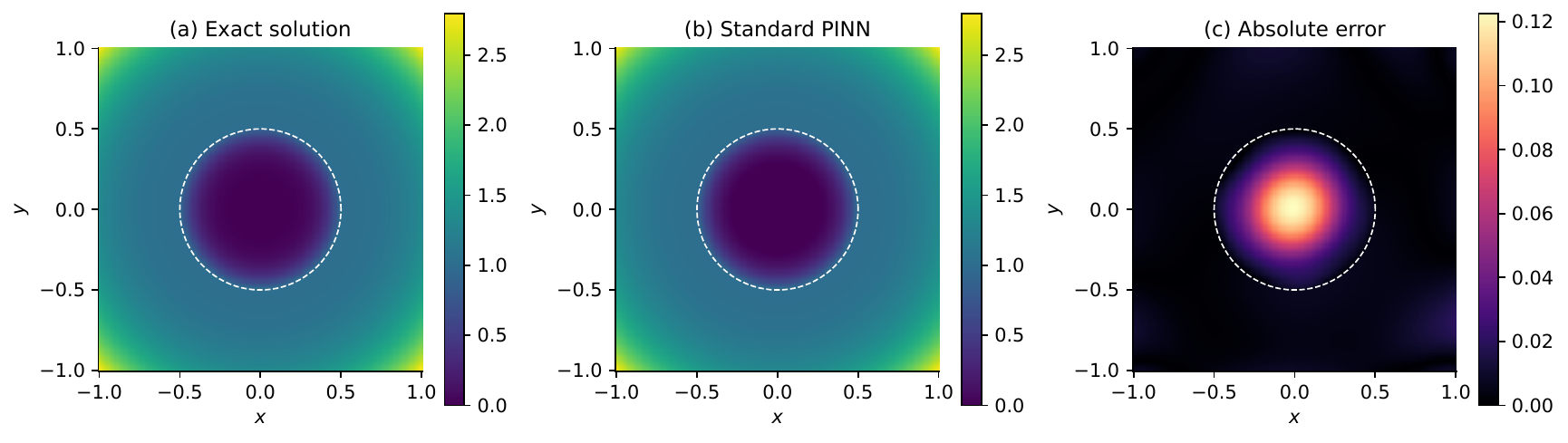}
\caption{Results for the standard PINN case C128 on the circular-interface
problem with $(\beta^-,\beta^+)=(1,100)$: exact solution (left),
standard PINN solution (middle), and pointwise absolute error (right).}\label{fig:circle-s-solution}
\end{figure}

\par
\begin{table}[!htbp]
\centering
\small
\renewcommand{\arraystretch}{1.13}
\ignorespaces
\captionsetup{font={small},labelfont={bf}}
\captionsetup{font={small},labelfont={bf}}
\begin{tabular}{@{}lrrrr@{}}
\toprule
Case & $J_{u,\infty}$ & $J_{u,2}$ & $J_{q,\infty}$ & $J_{q,2}$ \\
\midrule
U & $3.109\times10^{-15}$ & $3.179\times10^{-16}$ & $5.329\times10^{-14}$ & $4.770\times10^{-15}$\\
C128 & {$8.666\times10^{-3}$} & {$3.459\times10^{-3}$} & {$4.652\times10^{-2}$} & {$2.423\times10^{-2}$} \\
C64 & {$1.862\times10^{-2}$} & {$8.424\times10^{-3}$} & {$4.176\times10^{-2}$} & {$1.654\times10^{-2}$} \\
C256 & {$1.068\times10^{-2}$} & {$4.391\times10^{-3}$} & {$4.589\times10^{-2}$} & {$1.652\times10^{-2}$} \\
CW40 & {$7.545\times10^{-3}$} & {$3.488\times10^{-3}$} & {$6.231\times10^{-2}$} & {$2.873\times10^{-2}$} \\
\bottomrule
\end{tabular}
\caption{Interface solution- and flux-jump errors for the UINN and standard
PINN cases on the circular-interface problem with
$(\beta^-,\beta^+)=(1,100)$. The RMS values $J_{u,2}$ and $J_{q,2}$
are arc-length weighted.}
\label{tab:circle-s-jumps}
\end{table}
\par
Table~\ref{tab:circle-s-jumps} shows that, unlike the UINN, all standard
PINN cases retain nonzero solution and flux jumps on the independent
interface evaluation set. Changing the number of interface collocation
points from $64$ to $256$ does not produce a systematic reduction in
these jump errors, while the wider CW40 case also retains noticeable 
jump errors. Thus, neither denser interface sampling nor increased network capacity enables the standard PINN to recover the exact jump satisfaction achieved by construction in the UINN.
\par
\begin{table}[!htbp]
\centering
\small
\renewcommand{\arraystretch}{1.13}
\footnotesize
\setlength{\tabcolsep}{3.5pt}
\ignorespaces
\captionsetup{font={small},labelfont={bf}}
\captionsetup{font={small},labelfont={bf}}
\begin{tabular}{@{}lrrrrrr@{}}
\toprule
Case & $\Omega^-\;(\%)$ & $\Omega^+\;(\%)$ & $B^-\;(\%)$ & $B^+\;(\%)$
& Boundary RMS & Boundary max \\
\midrule
U
& 0.58924 & 0.37152 & 0.43399 & 0.53905
& $1.269\times10^{-3}$ & $2.943\times10^{-3}$ \\
C128
& {3.82862} & {2.58518}
& {2.02929} & {3.18455}
& {$6.827\times10^{-3}$}
& {$2.079\times10^{-2}$} \\
C64
& {3.81338} & {2.38268}
& {1.80048} & {3.49683}
& {$7.406\times10^{-3}$}
& {$1.753\times10^{-2}$} \\
C256
& {3.88669} & {2.05998}
& {1.97728} & {3.24819}
& {$5.841\times10^{-3}$}
& {$1.452\times10^{-2}$} \\
CW40
& {1.77983} & {1.51465}
& {1.42986} & {2.23833}
& {$3.521\times10^{-3}$}
& {$6.458\times10^{-3}$} \\
\bottomrule
\end{tabular}
\caption{PDE residual and boundary errors corresponding to the UINN and
standard PINN solutions to the circular-interface problem with
$(\beta^-,\beta^+)=(1,100)$. The four PDE-residual columns report
relative RMS errors in percent, while the last two columns report
absolute boundary errors.}
\label{tab:circle-s-residuals}
\end{table}
\par

\par\smallskip{Table~\ref{tab:circle-s-residuals} shows that the larger
solution errors of the standard PINN cases are accompanied by larger
PDE residuals and boundary errors than those of the UINN.\par}
\par
\begin{table}[!htbp]
\centering
\small
\renewcommand{\arraystretch}{1.13}
\ignorespaces
\captionsetup{font={small},labelfont={bf}}
\captionsetup{font={small},labelfont={bf}}
\ignorespaces
\begin{tabular}{@{}lrrrr@{}}
\toprule
Case
& $E_{\Gamma,u,2}$
& $E_{\Gamma,u,\infty}$
& $E_{\Gamma,q,2}$
& $E_{\Gamma,q,\mathrm{rel}}$ (\%) \\
\midrule
U & $1.630\times10^{-4}$ & $3.471\times10^{-4}$
     & $1.192\times10^{-2}$ & 0.11927 \\
C128 & $4.619\times10^{-3}$ & $1.162\times10^{-2}$ & $1.701\times10^{-1}$ & 1.70165 \\
C64 & $8.644\times10^{-3}$ & $1.827\times10^{-2}$ & $1.919\times10^{-1}$ & 1.91966 \\
C256 & $5.010\times10^{-3}$ & $1.332\times10^{-2}$ & $1.791\times10^{-1}$ & 1.79162 \\
CW40 & $4.558\times10^{-3}$ & $8.877\times10^{-3}$ & $4.672\times10^{-2}$ & 0.46739 \\
\bottomrule
\end{tabular}
\caption{Errors of UINN and the standard PINN predictions against the exact
interface trace and normal flux for the circular-interface problem with
$(\beta^-,\beta^+)=(1,100)$. The quantities are defined in
\eqref{eq:interface-errors} using the minus-side prediction.}
\label{tab:circle-interface-accuracy}
\end{table}
\par

\par\smallskip{Table~\ref{tab:circle-interface-accuracy} addresses a different
question from the jump measures above: whether the predicted interface
trace and normal flux are themselves close to the exact physical data.
The UINN gives substantially smaller trace and flux errors than the
standard PINN cases, including both the approximately equal-parameter
C128 case and the wider CW40 case. Thus, in this test, the common
interface data produced by the projection--correction construction are
also more accurate, not merely mutually consistent across the two
phases.\par}

To visualize the behavior of the UINN and the primary standard PINN
case C128 (primary benchmark case) across the interface, Figure~\ref{fig:circle-cross} compares the solution and the corresponding
Cartesian flux component along the representative cross-section $y=0$. In panel (a), the UINN tracks the exact solution closely across the
interface, whereas C128 exhibits a clear solution error in $\Omega^-$,
with its profile lying noticeably below the exact solution. 
Because the C128 error in $\Omega^-$ is close to a constant offset,
it is largely removed by differentiation, so the $\beta u_x$ curves
in panel (b) nearly coincide.
Thus panel (a)
provides the clearer visualization of the accuracy improvement achieved
by the UINN, while panel (b) serves only as a complementary derivative
profile.

\begin{figure}[!htbp]\captionsetup{font={small},labelfont={bf}}\centering
\includegraphics[width=0.9\linewidth]{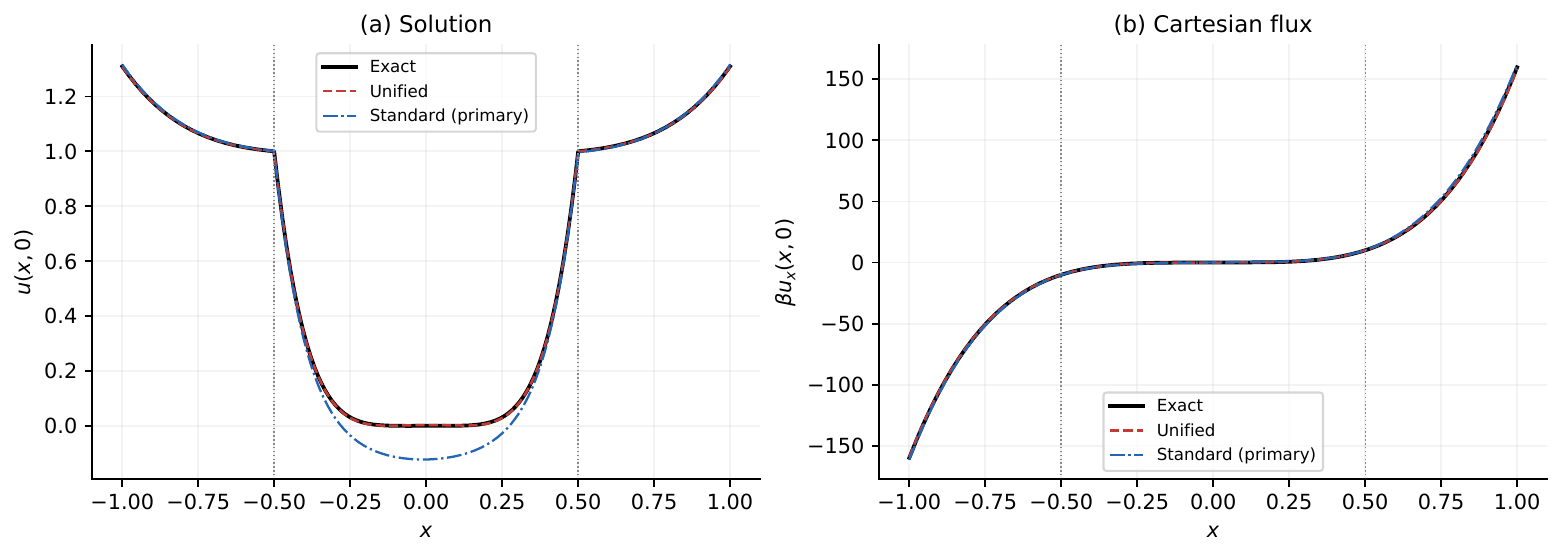}
\caption{Cross-sectional comparison for the circular-interface problem
with $(\beta^-,\beta^+)=(1,100)$ along $y=0$. Left: exact solution,
UINN, and the primary standard PINN case C128. Right: the corresponding
Cartesian flux component $\beta u_x$. Vertical dotted lines mark the
two intersections with $\Gamma$.}\label{fig:circle-cross}
\end{figure}

For the circular-interface experiment,
Tables~\ref{tab:circle-s-time} and~\ref{tab:circle-s-cost} quantify the
computational-cost measures. From
Table~\ref{tab:common-setup}, $N_I=1280$ and $N_b=256$. In the saved
UINN collocation set, $N_A=584$ PDE collocation points lie in the active
correction band, giving
\(
N_I+N_b+2N_A
=
1280+256+2(584)
=
2704
\)
network-point evaluations per loss evaluation, as reported in
Table~\ref{tab:circle-s-cost}. Table~\ref{tab:circle-s-time} reports the
corresponding optimization work and measured training time, while
Table~\ref{tab:circle-s-cost} reports the computational work associated
with one loss evaluation.
Tables~\ref{tab:circle-s-time} and~\ref{tab:circle-s-cost}
show that the UINN incurs higher computational cost and longer
training time than the standard PINN cases, primarily due to
projected interface evaluations and higher-order automatic
differentiation of the corrected field.
In the circular experiment, this additional cost is accompanied
by smaller solution errors, more accurate predictions of the
interface solution trace and physical normal flux
(Table~\ref{tab:circle-interface-accuracy}), and exact satisfaction
of both jump conditions by construction.
Possible strategies for reducing this overhead are discussed
in Section~\ref{sec:conclusion}.

\par
\begin{table}[!htbp]
\centering
\small
\renewcommand{\arraystretch}{1.13}
\footnotesize
\setlength{\tabcolsep}{3.5pt}
\ignorespaces
\captionsetup{font={small},labelfont={bf}}
\captionsetup{font={small},labelfont={bf}}
\begin{tabular}{@{}lrrrrrr@{}}
\toprule
Case & L-BFGS & Closures & Adam (s) & L-BFGS (s) & Total (s) & Final loss \\
\midrule
U & 1600 & 1828 & 208.78 & 200.32 & 409.10 & $3.030\times10^{-5}$ \\
C128 & {1600} & {1790} & {29.17} & {31.47} & {60.64} & {$1.378\times10^{-3}$} \\
C64 & {1600} & {1790} & {27.26} & {30.25} & {57.50} & {$1.327\times10^{-3}$} \\
C256 & {1600} & {1781} & {30.62} & {31.65} & {62.27} & {$1.240\times10^{-3}$} \\
CW40 & {1600} & {1741} & {39.36} & {41.29} & {80.65} & {$4.136\times10^{-4}$} \\
\bottomrule
\end{tabular}
\caption{Training time and optimization work for the UINN and standard
PINN cases on the circular-interface problem with
$(\beta^-,\beta^+)=(1,100)$. All cases use $2{,}000$ Adam updates and
$1{,}600$ L-BFGS iterations on the same one-thread AMD EPYC
configuration, with runs performed in separate sessions. Closures means the number of loss-and-gradient evaluations during L-BFGS. The reported
final losses correspond to the respective method-specific objectives.}\label{tab:circle-s-time}
\end{table}
\par
\par
\begin{table}[!htbp]
\centering
\small
\renewcommand{\arraystretch}{1.13}
\footnotesize
\setlength{\tabcolsep}{4pt}
\ignorespaces
\captionsetup{font={small},labelfont={bf}}
\captionsetup{font={small},labelfont={bf}}
\begin{tabular}{@{}lrrrr@{}}
\toprule
Case & Network points & Dense MACs ($10^6$) & Derivative order & Loss/backward (ms) \\
\midrule
U & 2,704 & 9.085 & 3 & 98.45 \\
C128 & 1,792 & 2.960 & 2 & {15.02} \\
C64 & 1,664 & 2.749 & 2 & {12.96} \\
C256 & 2,048 & 3.383 & 2 & {14.51} \\
CW40 & 1,792 & 5.949 & 2 & {18.91} \\
\bottomrule
\end{tabular}
\caption{Computational work and loss-and-backward timing for the
UINN and standard PINN cases on the circular-interface problem
with $(\beta^-,\beta^+)=(1,100)$.
Network-point counts and dense MACs refer to one loss evaluation,
with dense MACs counting linear-layer forward operations only.
The last column reports the time for evaluating the complete
training loss and backpropagating parameter gradients.}\label{tab:circle-s-cost}
\end{table}
\par

\FloatBarrier

\endgroup

\subsection{Star-shaped interface}
\label{sec:star}

Write $R=(x^2+y^2)^{1/2}$ and
$\vartheta=\operatorname{atan2}(y,x)$. The six-petal interface is
\begin{equation}
 F(x,y)=R^4\bigl(1+0.8\sin(6\vartheta)\bigr)-0.1=0,
 \qquad
 \rho(\vartheta)=
 \left(\frac{0.1}{1+0.8\sin(6\vartheta)}\right)^{1/4},
 \label{eq:star-interface}
\end{equation}
with $\Omega^-=\{R<\rho(\vartheta)\}$. The angular representation is
used to describe the interface, while the manufactured solution is
defined through the polynomial
\begin{equation}
 H(x,y)=10(x^2+y^2)^3
       +8(6x^5y-20x^3y^3+6xy^5)-(x^2+y^2).
 \label{eq:star-polynomial}
\end{equation}
For $R>0$, $H=10R^2F$, and hence $H=0$ on $\Gamma$.

For general diffusion coefficients $(\beta^-,\beta^+)$, we define
\begin{equation}
 u^s=1+\frac{4H}{\beta^s},\qquad
 \nabla u^s=\frac{4\nabla H}{\beta^s},\qquad
 f=16-1440(x^2+y^2)^2,\qquad
 g=1+\frac{4H}{\beta^+}\quad\text{on }\partial\Omega.
 \label{eq:star-exact}
\end{equation}
Since $\Delta H=360R^4-4$, these fields satisfy
$-\nabla\cdot(\beta\nabla u)=f$ in both phases. Moreover,
\[
u|_\Gamma=1,
\qquad
q_\Gamma=\beta^s\partial_{\bn}u^s
=4\nabla H\cdot\bn,
\]
so both the exact interface trace and the physical normal flux are
independent of the coefficient contrast.
In this section, we take
$(\beta^-,\beta^+)=(1,100)$.

\subsubsection{UINN experiments and results}
\label{sec:star-unified}

For the star-shaped interface, the UINN uses the same
projection--correction construction as in Section~\ref{sec:method}.
Since the interface is given explicitly in the radial form
$R=\rho(\vartheta)$, we take the signed radial defining coordinate
$\phi$ and the corresponding radial retraction $\cp$ as
\begin{equation}
 \phi(x,y)=R-\rho(\vartheta),\qquad
 \cp(x,y)=\rho(\vartheta)\boldsymbol e_r(\vartheta),
 \label{eq:star-geometry}
\end{equation}
where
$\boldsymbol{e}_r=(\cos\vartheta,\sin\vartheta)^T$ and
$\boldsymbol{e}_\vartheta=(-\sin\vartheta,\cos\vartheta)^T$.
With
$\ell=(\rho^2+(\rho')^2)^{1/2}$, the required geometric quantities are
\begin{equation}
 \bn=\frac{\rho\boldsymbol{e}_r-\rho'\boldsymbol{e}_\vartheta}{\ell},
 \qquad
 \kappa_\phi=\frac{\ell}{\rho},
 \qquad
 D\cp\,\bn=\bn-\kappa_\phi\boldsymbol{e}_r .
 \label{eq:star-geometry-factors}
\end{equation}
These quantities are substituted directly into
\eqref{eq:extension-coefficient}--\eqref{eq:corrected}.

\begin{figure}[!htbp]\centering
\includegraphics[width=\linewidth]{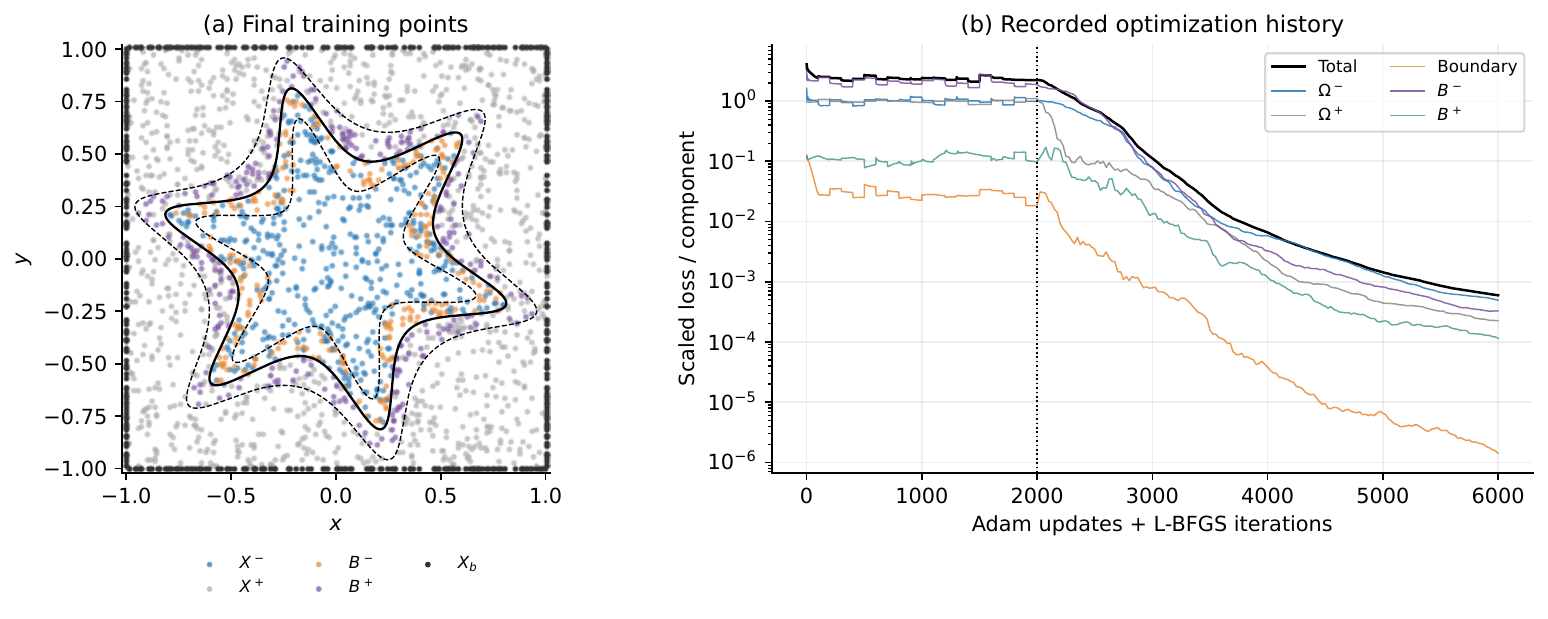}
\caption{UINN training for the star-shaped-interface problem with
$(\beta^-,\beta^+)=(1,100)$. Left: collocation points in
$\Omega^-$ and $\Omega^+$, in the two half-bands, and on the outer
boundary. The solid curve denotes $\Gamma$, and the dashed curves denote
the boundaries of the correction band $|\phi|<\delta$, with
$\delta=0.15$. Right: optimization history of the total loss and its
normalized residual components. The dotted vertical line separates Adam from L-BFGS.}\label{fig:star-u-training}
\end{figure}
The UINN is trained with $2,000$ Adam updates followed
by $4,000$ L-BFGS iterations. Figure~\ref{fig:star-u-training} shows the
collocation points and optimization history, while
Figure~\ref{fig:star-u-solution} shows the exact solution, UINN
approximation, and pointwise absolute error.
{Row ``U'' of Table~\ref{tab:star-s-errors} reports the resulting solution errors.} The phase-wise $L^2$ error is larger
in $\Omega^-$ than in $\Omega^+$, showing that the solution error is
concentrated primarily in the low-diffusion subdomain.

\begin{figure}[!htbp]\centering
\includegraphics[width=\linewidth]{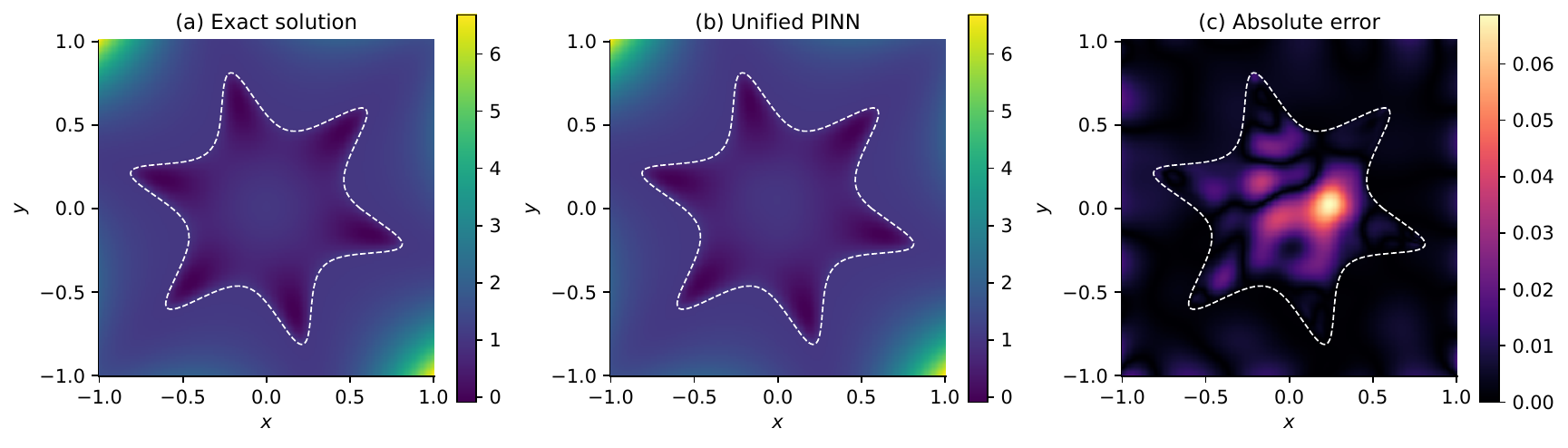}
\caption{Results on visualization grid for the star-shaped-interface problem, $(\beta^-,\beta^+)=(1,100)$: exact solution (left); UINN solution (middle), and pointwise absolute error (right). The dashed curve marks $\Gamma$.}
\label{fig:star-u-solution}
\end{figure}

{Row ``U'' of Table~\ref{tab:star-s-residuals} and the star-shaped
$(1,100)$ columns of Table~\ref{tab:uinn-raw-corrected-jumps}
summarize the PDE residuals and interface jump errors, respectively.} Similar to the observation in Section \ref{sec:circle}, the corrected
UINN satisfies both interface conditions to nearly machine precision, whereas
the raw branch fields retain noticeable solution and flux jump errors.
%
\subsubsection{Standard PINN benchmark tests}
\label{sec:star-standard}

For the star-shaped-interface standard PINN benchmark, S512 (with 512 interface collocation points) is
designated as the primary standard PINN case because its two
width-$28$ branches contain $3,474$ trainable parameters in total,
closely matching the $3,481$ parameters of the UINN. S256 and S1024
use the same architecture and differ only in the number of interface
collocation points (256 and 1024 points respectively), providing a check on the effect of interface
sampling. SW40 keeps $512$ interface points but uses width $40$ in
each branch, providing an equal-hidden-width (as UINN) comparison with increased
total parameter count. SP100 uses the same architecture and interface
sampling as S512 but increases both jump-loss weights from $10$ to
$100$, providing a check on the effect of stronger jump penalization.
All five standard cases use the same phase-wise, half-band, and
boundary collocation sets as the UINN and the same $2,000$ Adam
updates and $4,000$ L-BFGS iterations.
Table~\ref{tab:star-s-cases} summarizes these configurations.
\par
\begin{table}[!htbp]
\centering
\small
\renewcommand{\arraystretch}{1.13}
\ignorespaces
\captionsetup{font={small},labelfont={bf}}
\begin{tabular}{@{}lrrrrr@{}}
\toprule
Case & Hidden width & $N_\Gamma$ & Parameters & $\lambda_u$ & $\lambda_q$ \\
\midrule
U & 40 / shared & 0 & 3,481 & -- & -- \\
S512 & 28 / each & 512 & 3,474 & 10 & 10 \\
S256 & 28 / each & 256 & 3,474 & 10 & 10 \\
S1024 & 28 / each & 1024 & 3,474 & 10 & 10 \\
SW40 & 40 / each & 512 & 6,882 & 10 & 10 \\
SP100 & 28 / each & 512 & 3,474 & 100 & 100 \\
\bottomrule
\end{tabular}
\caption{Network, interface sampling, and penalty configurations for the star-shaped-interface problem with
$(\beta^-,\beta^+)=(1,100)$. S512 is the primary approximately
equal-parameter case; S256 and S1024 vary the number of interface
collocation points. SW40 uses the same hidden width as the UINN; SP100 increases interface penalty weights. All networks have three hidden layers.}
\label{tab:star-s-cases}
\end{table}
\par

Figure~\ref{fig:star-s-training} shows the collocation distribution
and optimization history for the primary S512 case, while
Figure~\ref{fig:star-s-solution} shows the exact solution, S512
approximation, and pointwise absolute error.

\begin{figure}[!htbp]\captionsetup{font={small},labelfont={bf}}\centering
\includegraphics[width=\linewidth]{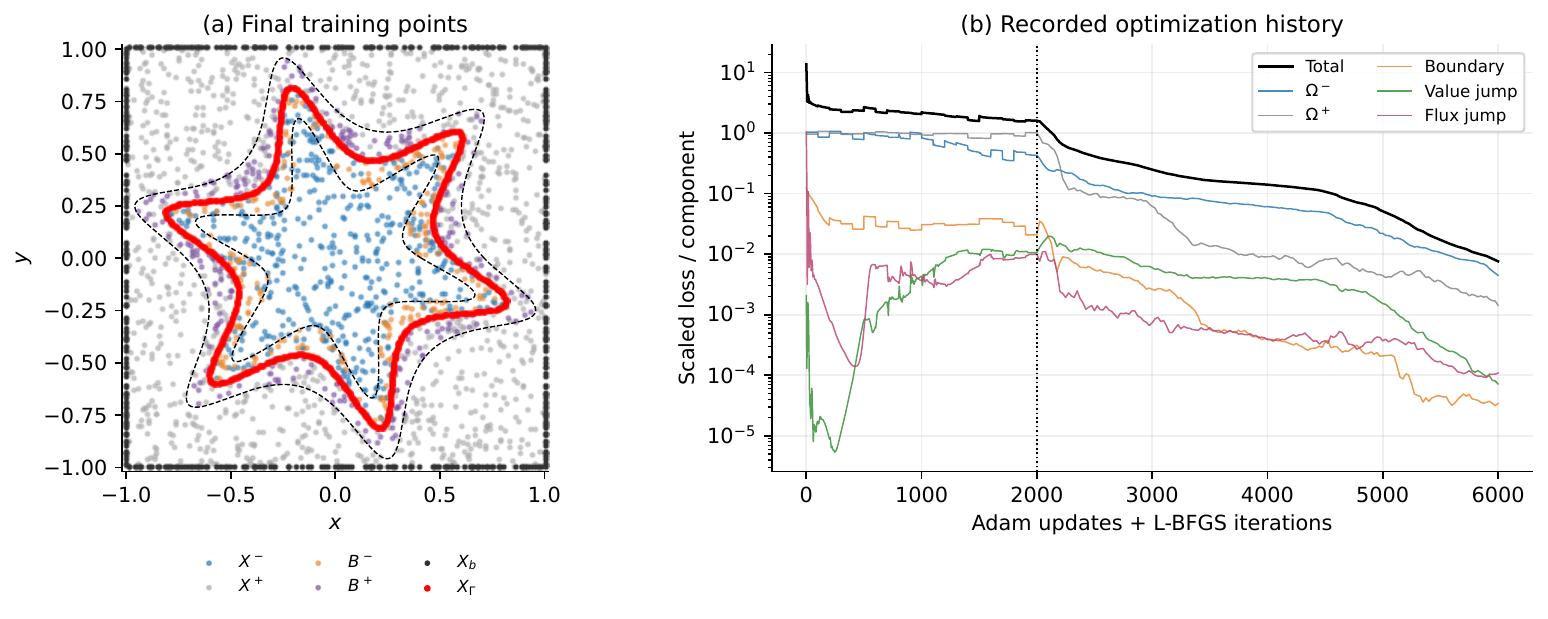}
\caption{Standard PINN case S512 for the star-shaped-interface problem
with $(\beta^-,\beta^+)=(1,100)$. Left: phase-wise, half-band,
boundary, and interface collocation points. Right: training history of
the total loss and its PDE, boundary, value-jump (solution-jump), and flux-jump components.}\label{fig:star-s-training}
\end{figure}

\par
\begin{table}[!htbp]
\centering
\small
\renewcommand{\arraystretch}{1.13}
\footnotesize
\setlength{\tabcolsep}{3.5pt}
\ignorespaces
\captionsetup{font={small},labelfont={bf}}
\begin{tabular}{@{}lrrrrrr@{}}
\toprule
Case & $E_0$ (\%) & $E_0^-$ (\%) & $E_0^+$ (\%) & $E_1$ (\%) & $E_E$ (\%) & $e_{h,\infty}$ \\
\midrule
U & 0.77382 & 2.75720 & 0.26945 & 2.18224 & 1.15946 & $6.864\times10^{-2}$ \\
S512 & 3.54195 & 12.79311 & 1.09148 & 9.61975 & 4.24068 & $1.942\times10^{-1}$ \\
S256 & 3.17113 & 11.91096 & 0.39023 & 4.67475 & 1.51013 & $2.106\times10^{-1}$ \\
S1024 & 3.63402 & 13.31295 & 0.93914 & 7.09168 & 3.70727 & $1.963\times10^{-1}$ \\
SW40 & 0.46373 & 1.67396 & 0.14376 & 1.04224 & 0.64360 & $2.247\times10^{-2}$ \\
SP100 & 33.54500 & 114.59420 & 14.93878 & 81.10927 & 40.87438 & $1.357\times10^{0}$ \\
\bottomrule
\end{tabular}
\caption{Solution error comparison between the UINN and standard PINN cases for
the star-shaped-interface problem with $(\beta^-,\beta^+)=(1,100)$.}
\label{tab:star-s-errors}
\end{table}
\par
\begin{figure}[!htbp]\captionsetup{font={small},labelfont={bf}}\centering
\includegraphics[width=\linewidth]{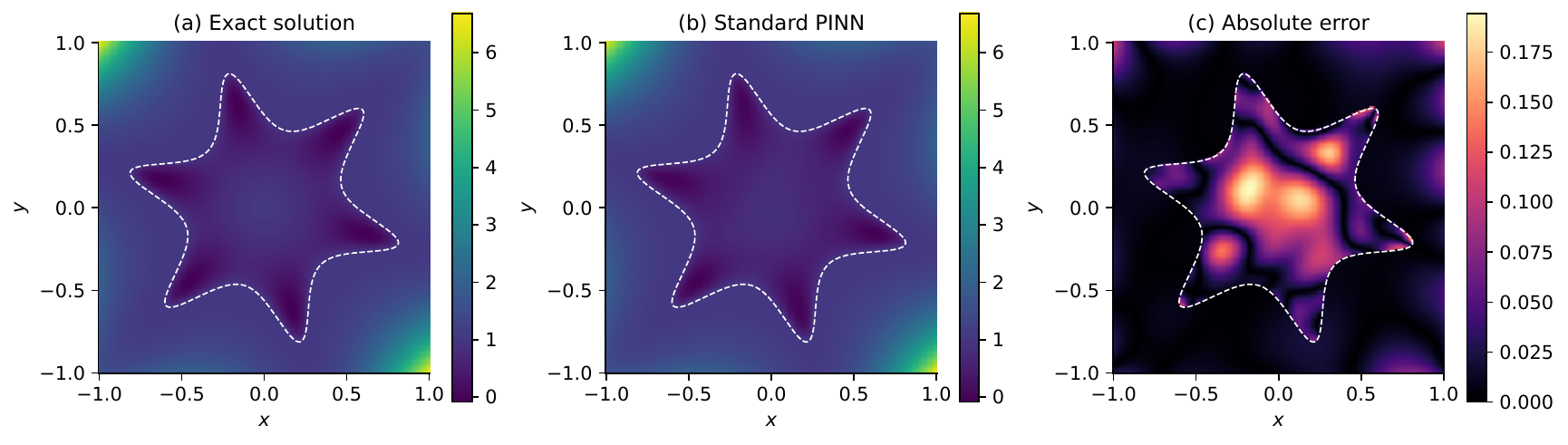}
\caption{Results for the standard PINN case S512 on the star-shaped-interface
problem with $(\beta^-,\beta^+)=(1,100)$: exact solution (left),
standard PINN solution (middle), and pointwise absolute error (right).}\label{fig:star-s-solution}
\end{figure}

Table~\ref{tab:star-s-errors} compares the solution errors of the UINN
with the five standard PINN cases. At approximately equal total
parameter count, the UINN is more accurate than the primary S512 case.
Changing only the number of interface collocation points does not
produce a systematic improvement among S256, S512, and S1024. The
wider SW40 case, which uses nearly twice as many trainable parameters,
achieves modestly smaller solution errors, while increasing the jump-loss weights
in SP100 substantially degrades the approximation.

Consistent with the circular-interface experiment, increasing interface sampling alone does not systematically improve standard PINN accuracy. A wider network can improve accuracy, whereas stronger jump penalties may degrade it. These observations highlight sensitivity to network capacity and penalty selection, while UINN avoids interface-penalty tuning altogether.\par
\begin{table}[!htbp]
\centering
\small
\renewcommand{\arraystretch}{1.13}
\ignorespaces
\captionsetup{font={small},labelfont={bf}}
\begin{tabular}{@{}lrrrr@{}}
\toprule
Case & $J_{u,\infty}$ & $J_{u,2}$ & $J_{q,\infty}$ & $J_{q,2}$ \\
\midrule
U & $2.487\times10^{-14}$ & $4.413\times10^{-15}$ & $2.043\times10^{-12}$ & $2.524\times10^{-13}$ \\
S512 & $1.890\times10^{-1}$ & $5.693\times10^{-2}$ & $1.581\times10^{0}$ & $3.430\times10^{-1}$ \\
S256 & $4.952\times10^{-2}$ & $1.746\times10^{-2}$ & $5.195\times10^{-1}$ & $1.314\times10^{-1}$ \\
S1024 & $1.671\times10^{-1}$ & $4.963\times10^{-2}$ & $9.339\times10^{-1}$ & $1.932\times10^{-1}$ \\
SW40 & $1.530\times10^{-2}$ & $5.059\times10^{-3}$ & $1.333\times10^{-1}$ & $3.884\times10^{-2}$ \\
SP100 & $9.768\times10^{-1}$ & $1.760\times10^{-1}$ & $1.366\times10^{0}$ & $4.097\times10^{-1}$ \\
\bottomrule
\end{tabular}
\caption{Interface solution- and flux-jump errors for the UINN and standard
PINN cases on the star-shaped-interface problem with
$(\beta^-,\beta^+)=(1,100)$. The RMS values $J_{u,2}$ and $J_{q,2}$
are arc-length weighted.}
\label{tab:star-s-jumps}
\end{table}
\par

Table~\ref{tab:star-s-jumps} further reveals a structural difference
between the two approaches. The UINN satisfies both jump conditions to
nearly machine precision, whereas all standard PINN cases retain finite jump
errors. In particular, although the wider SW40 case achieves modestly
smaller solution errors than the UINN, it does not capture interfacial jump conditions precisely. Thus, increasing approximation capacity can improve the
standard PINN solution without reproducing the exact interface jump
property built into the UINN.

Exact jump enforcement and solution accuracy are therefore viewed as distinct but related: satisfying the jump conditions exactly does not guarantee the smallest solution error. By eliminating jump penalties, UINN avoids their competition with PDE and boundary residuals, potentially improving optimization. In the circular test, UINN achieves smaller solution and interface errors than the standard PINNs, whereas the star-shaped test shows that a wider standard PINN can achieve moderately higher solution accuracy but still retains nonzero jumps and dependence on penalty weights.\par
\begin{table}[!htbp]
\centering
\small
\renewcommand{\arraystretch}{1.13}
\ignorespaces
\captionsetup{font={small},labelfont={bf}}
\begin{tabular}{@{}lrrrr@{}}
\toprule
Case & $E_{\Gamma,u,2}$
& $E_{\Gamma,u,\infty}$
& $E_{\Gamma,q,2}$
& $E_{\Gamma,q,\mathrm{rel}}~(\%)$ \\\midrule
U & $2.845\times10^{-3}$ & $7.244\times10^{-3}$ & $2.330\times10^{-1}$ & 1.22574 \\
S512 & $5.844\times10^{-2}$ & $1.905\times10^{-1}$ & $1.296\times10^{0}$ & 6.81611 \\
S256 & $1.831\times10^{-2}$ & $5.059\times10^{-2}$ & $4.214\times10^{-1}$ & 2.21692 \\
S1024 & $5.099\times10^{-2}$ & $1.777\times10^{-1}$ & $1.022\times10^{0}$ & 5.37759 \\
SW40 & $5.207\times10^{-3}$ & $1.568\times10^{-2}$ & $1.400\times10^{-1}$ & 0.73649 \\
SP100 & $2.787\times10^{-1}$ & $1.203\times10^{0}$ & $1.851\times10^{1}$ & 97.39772 \\
\bottomrule
\end{tabular}
\caption{Errors of UINN and the standard PINN predictions against the
exact interface trace and normal flux for the star-shaped-interface
problem with $(\beta^-,\beta^+)=(1,100)$. The quantities are defined
in \eqref{eq:interface-errors} using the minus-side prediction.}\label{tab:star-interface-accuracy}
\end{table}
\par

Table~\ref{tab:star-interface-accuracy} complements the jump-error
comparison by measuring accuracy against the exact interface data.
The UINN gives smaller solution-trace and normal-flux errors than the
approximately equal-parameter S512 case. The wider SW40 case gives a
slightly smaller normal-flux error than the UINN, consistent with its
smaller gradient-based errors in this test, while the UINN retains the
smaller solution-trace error and satisfies the jump conditions exactly.
These results indicate that exact jump enforcement and accuracy against
the exact interface data need not improve in precisely the same way.

\par
\begin{table}[!htbp]
\centering
\small
\renewcommand{\arraystretch}{1.13}
\footnotesize
\setlength{\tabcolsep}{3.5pt}
\ignorespaces
\captionsetup{font={small},labelfont={bf}}
\begin{tabular}{@{}lrrrrrr@{}}
\toprule
Case & $\Omega^-\;(\%)$ & $\Omega^+\;(\%)$ & $B^-\;(\%)$ & $B^+\;(\%)$
& Boundary RMS & Boundary max\\
\midrule
U     & 3.83317 & 1.96830 & 3.47715 & 3.64919
      & $7.881\times10^{-3}$ & $1.817\times10^{-2}$ \\
S512  & 7.67537 & 4.18761 & 5.51867 & 6.74182
      & $3.726\times10^{-2}$ & $1.179\times10^{-1}$ \\
S256  & 3.01563 & 2.26494 & 2.03323 & 1.90335
      & $1.019\times10^{-2}$ & $2.823\times10^{-2}$ \\
S1024 & 5.29605 & 3.07800 & 3.74411 & 5.57247
      & $2.848\times10^{-2}$ & $8.778\times10^{-2}$ \\
SW40  & 1.28775 & 1.10058 & 1.11284 & 1.29916
      & $4.436\times10^{-3}$ & $1.636\times10^{-2}$ \\
SP100 & 92.63533 & 36.54518 & 91.24783 & 57.98901
      & $4.087\times10^{-1}$ & $1.211\times10^{0}$ \\
\bottomrule
\end{tabular}
\caption{PDE residual and boundary errors corresponding to the UINN and
standard PINN solutions to the star-shaped-interface problem with
$(\beta^-,\beta^+)=(1,100)$. The four PDE-residual columns report
relative RMS errors in percent, while the last two columns report
absolute boundary errors.}
\label{tab:star-s-residuals}
\end{table}
\par
Table~\ref{tab:star-s-residuals} shows that the improved
solution accuracy of SW40 is accompanied by smaller PDE
residuals and boundary errors, consistent with its lower
normal-flux error in Table~\ref{tab:star-interface-accuracy}.
These results illustrate the benefits of increased network
capacity, but the improvement requires substantially more
trainable parameters and does not eliminate the nonzero
jump errors for SW40 reported in Table~\ref{tab:star-s-jumps}.
In contrast, UINN enforces both jump conditions exactly
by construction with fewer parameters and without
interface-penalty tuning.
Similar to the circular-interface experiment, the UINN incurs a higher
computational cost than the standard PINN because of the additional
projection--correction operations and higher-order automatic
differentiation. We therefore do not repeat the detailed timing and
computational-cost comparison here.

\FloatBarrier
\begingroup\ignorespaces
\subsection{Star-shaped interface for high-contrast coefficient}
\label{sec:star1000}

Here we consider the star-shaped-interface problem of
Section~\ref{sec:star} in a high-contrast setting with
$(\beta^-,\beta^+)=(1,1000)$, while keeping all other geometric and
problem settings unchanged. Throughout this subsection, the suffix
``-K'' in the figures and tables is used to distinguish the $(1{,}1000)$ cases from the corresponding
$(1{,}100)$ cases.

\subsubsection{UINN experiments and results}
\label{sec:star1000-unified}

As in Section~\ref{sec:star-unified}, the UINN uses
the same projection--correction construction,
geometric representation, network architecture,
and collocation sets. Training uses $2,000$ Adam updates
followed by $8,000$ L-BFGS iterations, while all other optimization
settings remain unchanged. Figure~\ref{fig:star1000-u-training} shows
the collocation distribution and optimization history, while
Figure~\ref{fig:star1000-u-solution} shows the exact solution, UINN
approximation, and pointwise absolute error.
{Row ``U-K'' of Table~\ref{tab:star1000-standard-errors} reports the
corresponding solution errors, where the relative $L^2$ error is only 0.7\%.}
\begin{figure}[!htbp]\captionsetup{font={small},labelfont={bf}}\centering
\includegraphics[width=\linewidth]{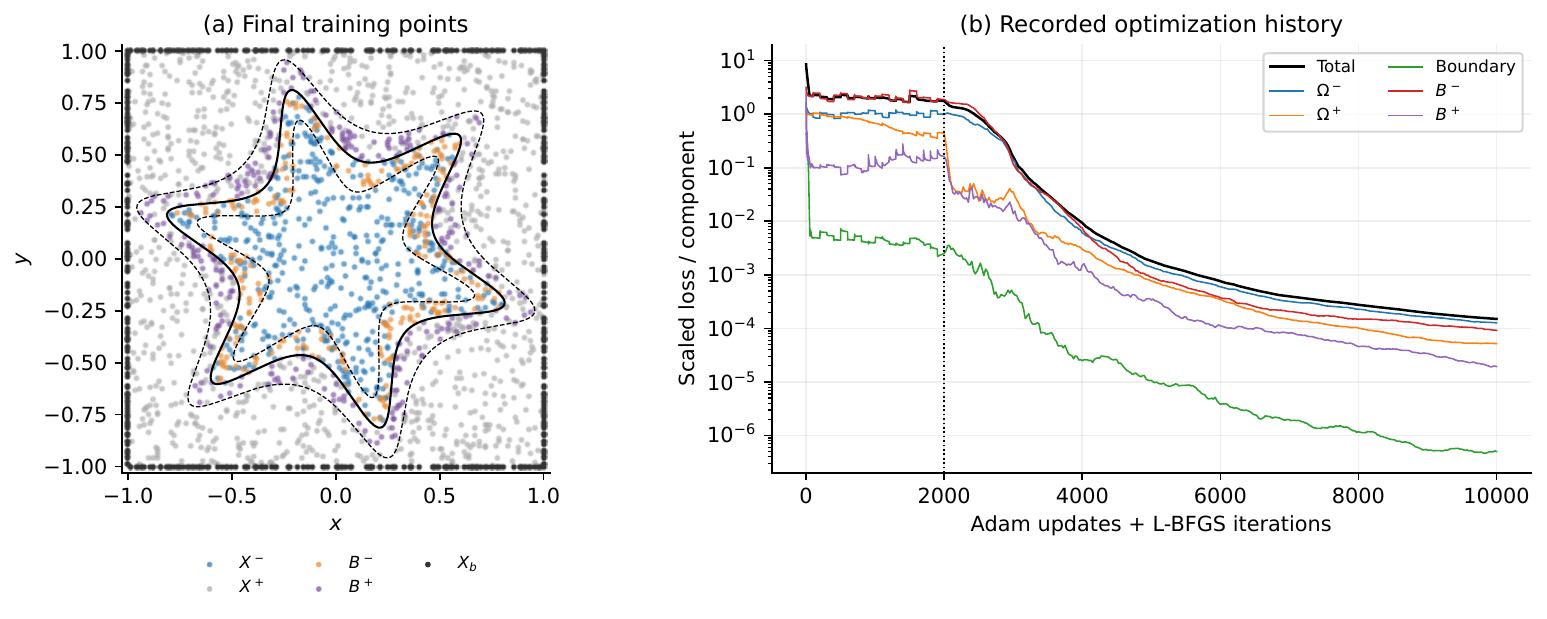}
\caption{UINN training for the star-shaped-interface problem with
$(\beta^-,\beta^+)=(1,1000)$. Left: collocation points in
$\Omega^-$ and $\Omega^+$, in the two half-bands, and on the outer
boundary. The solid curve denotes $\Gamma$, and the dashed curves denote
the boundaries of the correction band $|\phi|<\delta$, with
$\delta=0.15$. Right: optimization history of the total loss and its
normalized residual components. The dotted vertical line separates Adam from L-BFGS.}\label{fig:star1000-u-training}
\end{figure}

\begin{figure}[!htbp]\captionsetup{font={small},labelfont={bf}}\centering
\includegraphics[width=\linewidth]{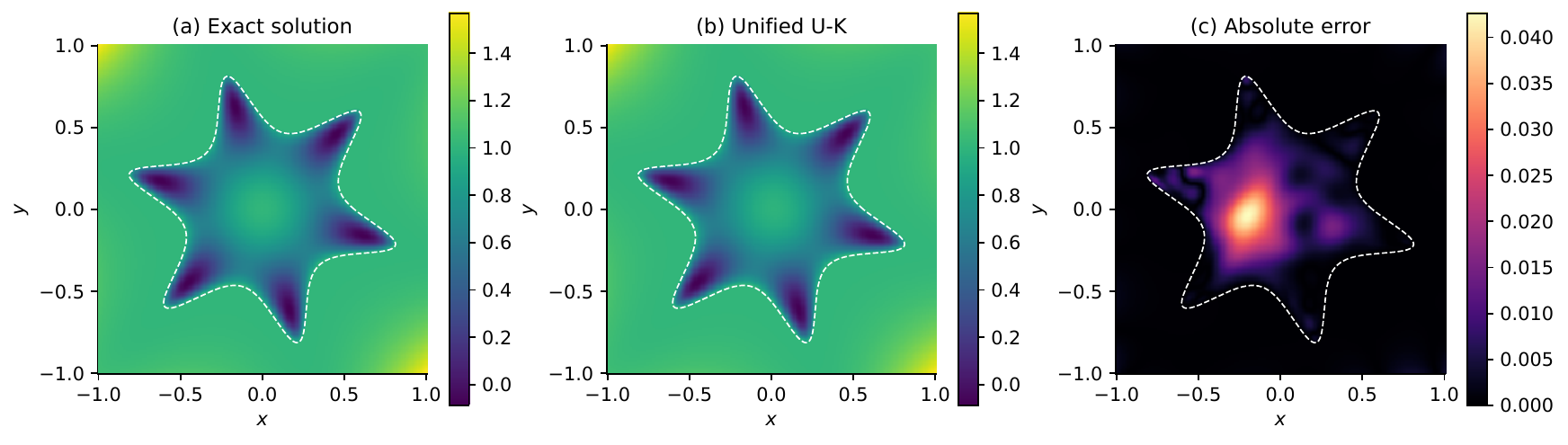}
\caption{Results on visualization grid for the star-shaped-interface problem, $(\beta^-,\beta^+)=(1,1000)$: exact solution (left); UINN solution (middle), and pointwise absolute error (right). The dashed curve marks $\Gamma$.}
\label{fig:star1000-u-solution}
\end{figure}

The UINN maintains good accuracy in the high-contrast case, with the
larger phase-wise $L^2$ error again concentrated in the low-diffusion
subdomain $\Omega^-$. Recall that the relative $L^2$ error $E_0$ and
its subdomain-restricted counterparts are defined in \eqref{eq:errors}.
For the exterior solution in \eqref{eq:star-exact},
$u^+=1+4H/\beta^+$, so when $\beta^+$ is large, $u^+$ is close to the
constant value $1$. Consequently, the denominator
$\|u\|_{L^2(\Omega^+)}$ in the standard relative error is dominated by
this constant part, and the resulting relative error does not directly
measure the accuracy relative to the much smaller variation $u^+-1$.
For this purpose, one may instead consider the shifted relative error
\[\widetilde E_0^+
=\frac{\|u_h-u\|_{L^2(\Omega^+)}}{\|u-1\|_{L^2(\Omega^+)}},\]
which is $0.56196\%$ in this test and still indicates good accuracy in
$\Omega^+$. The sampled maximum error $e_{h,\infty}$ provides an
absolute measure that does not depend on this choice of normalization.
Together, the relative and maximum errors show that the dominant error
remains concentrated in $\Omega^-$ while the UINN retains good overall
accuracy in the high-contrast case.

{Row ``U-K'' of Table~\ref{tab:star1000-standard-residuals} reports the independently}
evaluated PDE residuals for the high-contrast UINN, and the relative RMS
residuals remain below $3.3\%$ in both phases and their corresponding
half-bands. {The star-shaped $(1,1000)$ columns of Table~\ref{tab:uinn-raw-corrected-jumps} show} that the UINN
continues to enforce both jump conditions to nearly machine precision in the
high-contrast regime.
\endgroup

\subsubsection{Standard PINN benchmark tests}
\label{sec:star1000-standard}

We also use the same five standard PINN
configurations for the high-contrast test. S512-K is the primary
approximately equal-parameter case; S256-K and S1024-K vary the number
of interface collocation points, SW40-K increases the network width,
and SP100-K increases both jump-loss weights to $100$. All cases use
the same phase-wise, half-band, and boundary collocation sets as U-K (UINN case)
and are trained with $2,000$ Adam updates followed by $8,000$ L-BFGS
iterations. Table~\ref{tab:star1000-standard-cases} summarizes these
configurations.
\begin{table}[!htbp]
\centering\small
\ignorespaces
\captionsetup{font={small},labelfont={bf}}
\renewcommand{\arraystretch}{1.13}
\begin{tabular}{@{}lrrrrr@{}}
\toprule
Case & Hidden width & $N_\Gamma$ & Parameters & $\lambda_u$ & $\lambda_q$ \\
\midrule
U-K & 40 / shared & 0 & 3,481 & -- & -- \\
S512-K & 28 / each & 512 & 3,474 & 10 & 10 \\
S256-K & 28 / each & 256 & 3,474 & 10 & 10 \\
S1024-K & 28 / each & 1024 & 3,474 & 10 & 10 \\
SW40-K & 40 / each & 512 & 6,882 & 10 & 10 \\
SP100-K & 28 / each & 512 & 3,474 & 100 & 100 \\
\bottomrule
\end{tabular}
\caption{Network, interface sampling, and penalty configurations for the star-shaped-interface problem with
$(\beta^-,\beta^+)=(1,1000)$. S512-K is the primary approximately
equal-parameter case; S256-K and S1024-K vary the number of interface
collocation points. SW40-K uses the same hidden width as the UINN;
SP100-K increases interface penalty weights. All networks have three hidden layers.}
\label{tab:star1000-standard-cases}
\end{table}

\begin{figure}[!htbp]\captionsetup{font={small},labelfont={bf}}\centering
\includegraphics[width=\linewidth]{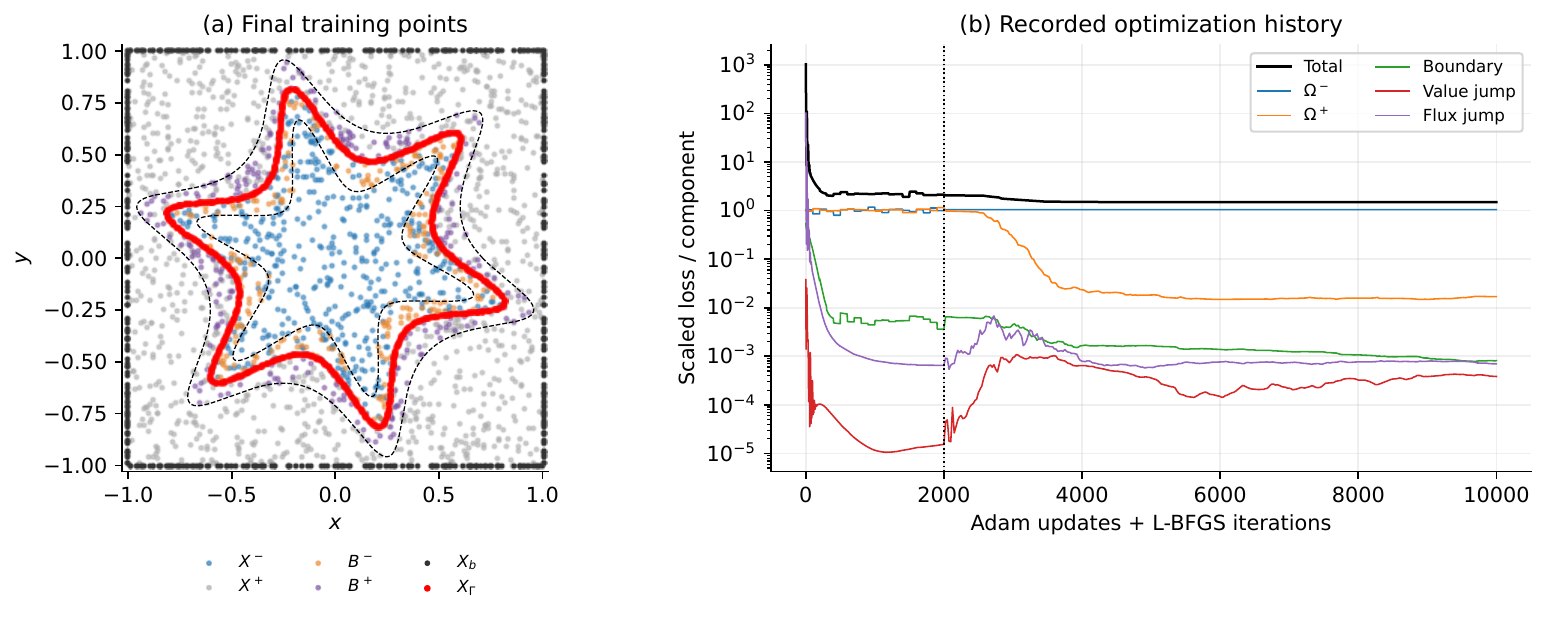}
\caption{Standard PINN case S512-K for the star-shaped-interface problem
with $(\beta^-,\beta^+)=(1,1000)$. Left: phase-wise, half-band,
boundary, and interface collocation points. Right: training history of
the total loss and its PDE, boundary, value-jump (solution-jump), and flux-jump components.}
\label{fig:star1000-s-training}
\end{figure}

\begin{table}[!htbp]
\centering\small
\ignorespaces
\captionsetup{font={small},labelfont={bf}}
\renewcommand{\arraystretch}{1.13}
\footnotesize\setlength{\tabcolsep}{3.4pt}
\begin{tabular}{@{}lrrrrrr@{}}
\toprule
Case & $E_0\;(\%)$ & $E_0^-\;(\%)$ & $E_0^+\;(\%)$ & $E_1\;(\%)$ & $E_E\;(\%)$ & $e_{\infty,h}$ \\
\midrule
{U-K} & {0.71298} & {1.95048} & {0.03805} & {1.36453} & {1.44939} & {$4.256\times10^{-2}$} \\
S512-K & 30.17216 & 82.42987 & 2.32323 & 99.07127 & 47.76575 & $1.117\times10^{0}$ \\
S256-K & 29.99153 & 81.98301 & 2.03906 & 98.96409 & 46.84629 & $1.116\times10^{0}$ \\
S1024-K & 30.20652 & 82.54468 & 2.20870 & 99.04428 & 46.31087 & $1.113\times10^{0}$ \\
SW40-K & 29.92244 & 81.82257 & 1.85024 & 98.99869 & 44.90179 & $1.113\times10^{0}$ \\
SP100-K & 32.06092 & 87.43113 & 3.22003 & 99.74500 & 57.73221 & $1.124\times10^{0}$ \\
\bottomrule
\end{tabular}
\caption{Solution error comparison between the UINN and standard PINN
cases for the star-shaped-interface problem with
$(\beta^-,\beta^+)=(1,1000)$.}
\label{tab:star1000-standard-errors}
\end{table}

Figure~\ref{fig:star1000-s-training} shows the collocation distribution
and optimization history for the primary S512-K benchmark case. Table~\ref{tab:star1000-standard-errors} shows a pronounced loss of
accuracy for all tested standard PINN configurations in the
high-contrast regime. Tables~\ref{tab:star1000-standard-jumps}--\ref{tab:star1000-standard-residuals}
further clarify the nature of this loss of accuracy.
Table~\ref{tab:star1000-standard-jumps} shows that stronger interface
penalties can substantially reduce the solution and flux jumps of the
standard PINN. However, Table~\ref{tab:star1000-interface-accuracy}
shows that this improved interfacial consistency does not translate
into an accurate physical interface flux. In particular, SP100-K
retains an interface-flux error of approximately $100\%$ despite its
smaller jump errors. Table~\ref{tab:star1000-standard-residuals}
further shows that the discrepancy is not confined to the interface:
the standard PINN cases also retain large post-training PDE residuals
in the subdomains. Thus, in this high-contrast test, reducing the
penalty-based interface mismatch alone is insufficient to recover an
accurate solution.

\begin{table}[!htbp]
\centering\small
\ignorespaces
\captionsetup{font={small},labelfont={bf}}
\renewcommand{\arraystretch}{1.13}
\begin{tabular}{@{}lrrrr@{}}
\toprule
Case & $J_{u,\infty}$ & $J_{u,2}$ & $J_{q,\infty}$ & $J_{q,2}$ \\
\midrule
{U-K} & {$3.020\times10^{-14}$} & {$5.196\times10^{-15}$} & {$3.524\times10^{-12}$} & {$3.909\times10^{-13}$} \\
S512-K & $6.181\times10^{-2}$ & $3.074\times10^{-2}$ & $3.103\times10^{0}$ & $8.669\times10^{-1}$ \\
S256-K & $7.466\times10^{-2}$ & $3.395\times10^{-2}$ & $3.990\times10^{0}$ & $9.378\times10^{-1}$ \\
S1024-K & $6.791\times10^{-2}$ & $3.059\times10^{-2}$ & $3.345\times10^{0}$ & $9.506\times10^{-1}$ \\
SW40-K & $1.147\times10^{-1}$ & $3.467\times10^{-2}$ & $2.647\times10^{0}$ & $6.818\times10^{-1}$ \\
SP100-K & $7.590\times10^{-3}$ & $2.757\times10^{-3}$ & $6.501\times10^{-1}$ & $1.550\times10^{-1}$ \\
\bottomrule\end{tabular}
\caption{Interface solution- and flux-jump errors for the UINN and standard PINN
cases on the star-shaped-interface problem with
$(\beta^-,\beta^+)=(1,1000)$. The RMS values $J_{u,2}$ and $J_{q,2}$
are arc-length weighted.}
\label{tab:star1000-standard-jumps}
\end{table}

\begin{table}[!htbp]
\centering\small
\ignorespaces
\captionsetup{font={small},labelfont={bf}}
\renewcommand{\arraystretch}{1.13}
\begin{tabular}{@{}lrrrr@{}}
\toprule
Case
& $E_{\Gamma,u,2}$
& $E_{\Gamma,u,\infty}$
& $E_{\Gamma,q,2}$
& $E_{\Gamma,q,\mathrm{rel}}\;(\%)$ \\
\midrule
{U-K}
& {$1.248\times10^{-4}$}
& {$2.668\times10^{-4}$}
& {$2.388\times10^{-1}$}
& {1.25618} \\
S512-K
& $3.926\times10^{-2}$
& $8.758\times10^{-2}$
& $1.887\times10^{1}$
& 99.26183 \\
S256-K
& $3.860\times10^{-2}$
& $9.182\times10^{-2}$
& $1.886\times10^{1}$
& 99.20628 \\
S1024-K
& $3.883\times10^{-2}$
& $9.107\times10^{-2}$
& $1.887\times10^{1}$
& 99.26012 \\
SW40-K
& $3.920\times10^{-2}$
& $1.310\times10^{-1}$
& $1.886\times10^{1}$
& 99.21700 \\
SP100-K
& $3.207\times10^{-2}$
& $3.607\times10^{-2}$
& $1.901\times10^{1}$
& 100.00458 \\
\bottomrule
\end{tabular}
\caption{Errors of UINN and the standard PINN predictions against the
exact interface trace and normal flux for the star-shaped-interface
problem with $(\beta^-,\beta^+)=(1,1000)$. The quantities are defined
in \eqref{eq:interface-errors} using the minus-side prediction.}
\label{tab:star1000-interface-accuracy}
\end{table}

\begin{table}[!htbp]
\centering\small
\ignorespaces
\captionsetup{font={small},labelfont={bf}}
\renewcommand{\arraystretch}{1.13}
\footnotesize\setlength{\tabcolsep}{3.4pt}
\begin{tabular}{@{}lrrrrrr@{}}
\toprule
Case & $\Omega^-\;(\%)$ & $\Omega^+\;(\%)$ & $B^-\;(\%)$ & $B^+\;(\%)$ & Boundary RMS & Boundary max \\
\midrule
{U-K} & {3.29284} & {0.78372} & {3.01668} & {1.99651} & {$1.079\times10^{-3}$} & {$3.634\times10^{-3}$} \\
S512-K & 99.73878 & 14.21052 & 99.76709 & 44.06995 & $4.198\times10^{-2}$ & $1.259\times10^{-1}$ \\
S256-K & 99.61573 & 12.83633 & 99.65123 & 43.46330 & $3.974\times10^{-2}$ & $1.345\times10^{-1}$ \\
S1024-K & 99.70112 & 13.76856 & 99.73077 & 42.90537 & $3.931\times10^{-2}$ & $1.183\times10^{-1}$ \\
SW40-K & 99.61554 & 12.82871 & 99.66845 & 41.62847 & $3.588\times10^{-2}$ & $1.037\times10^{-1}$ \\
SP100-K & 99.99777 & 18.50347 & 99.99799 & 46.81829 & $6.279\times10^{-2}$ & $1.919\times10^{-1}$ \\
\bottomrule\end{tabular}
\caption{PDE residual and boundary errors for the UINN and standard PINN
cases on the star-shaped-interface problem with
$(\beta^-,\beta^+)=(1,1000)$. The four PDE-residual columns report
relative RMS errors in percent, while the last two columns report
absolute boundary errors.}
\label{tab:star1000-standard-residuals}
\end{table}
\FloatBarrier
The high-contrast test reveals substantially different results for UINN and the standard PINNs under the common training protocol considered here. All tested standard PINN configurations exhibit large solution and
PDE residual errors, indicating that the difficulty is not limited to
imperfect enforcement of the jump conditions. This observation, however, \emph{should
not} be interpreted as implying that standard PINNs cannot approximate
the high-contrast problem. It merely indicates that the tested penalty-based
configurations do not recover ideal accuracy under the present
architectures, sampling strategy, penalty choices, and optimization
budgets. 

In contrast, the UINN retains good solution accuracy while
satisfying both jump conditions exactly by construction.
These results demonstrate the robust performance of UINN even in the high-contrast regime and highlight a key structural advantage: interface matching is removed from the optimization objective, avoiding competition between PDE and boundary residuals and separately weighted jump penalties. In contrast, the standard PINN must minimize these residuals simultaneously, and its performance under the tested configurations is sensitive to network capacity and penalty selection.

Similar to the experiments in Sections \ref{sec:circle}-\ref{sec:star}, the UINN incurs a higher
computational cost and longer training time than the standard PINN.
The detailed computational-cost comparison is therefore not repeated
here.

\section{Conclusion and Discussion}
\label{sec:conclusion}

This work develops UINN, a unified neural approximation with a single shared network in which both solution and flux jump conditions are enforced exactly by a projection--correction construction. The resulting formulation removes interface matching from the loss and shows robust and accurate performance across different interface geometries and coefficient contrasts, including the high-contrast regime.

The current implementation is more expensive than the standard PINN benchmarks because of projected interface evaluations and higher-order automatic differentiation. This overhead is not necessarily intrinsic to the projection-correction principle. In particular, analytic treatment of correction derivatives can reduce nested differentiation, while adaptive residual sampling has been shown to improve PINN efficiency by concentrating collocation points where they are most needed~\cite{LuDeepXDE2021,WuAdaptiveSampling2023}. More efficient automatic-differentiation and compiled/vectorized execution provide another practical route to reducing this overhead~\cite{BaydinAD2018}. 

This work lays the foundation for UINN as a general neural approximation
framework for interface problems, in which the network trial-space
functions are designed to capture the prescribed interface conditions.
The elliptic interface problem studied here serves as a methodological
cornerstone for this framework, realizing this network-architectural
principle through a unified branch representation and a
projection-correction mechanism. Future work will focus on
computationally optimized realizations, approximation and stability
analysis of the UINN trial space and projection-correction construction,
as well as extensions to moving-interface problems and more complex nonlinear and coupled systems.

\section*{Data availability statements}
The code and data that support the findings of this study are available on request from the corresponding author.
\section*{Declaration of generative AI and AI-assisted technologies in the \\manuscript preparation process}
During the preparation of this manuscript, the authors used OpenAI's ChatGPT and Codex for language polishing, proofreading, improving clarity and readability, and limited assistance with code debugging. Following the use of these tools, the authors reviewed and edited the relevant content as needed. All scientific ideas, methodologies, algorithms, experimental designs and implementations, benchmark evaluations, and conclusions were independently developed and validated by the authors. The use of AI tools did not determine the scientific content or interpretation of the results. The authors take full responsibility for the content of the manuscript.

\bibliographystyle{plain}

\end{document}